\newcommand{\bs}[1]{\boldsymbol{#1}} 
\newtheorem{proposition}{Proposition}
\algrenewcommand\algorithmicensure{\textbf{Output:}}
\algrenewcommand\algorithmicrequire{\textbf{Input:}}
\newcommand{\new}[1]{\textcolor{black}{#1}}
\newcommand{\newnew}[1]{\textcolor{black}{#1}}
\title[A Spectral Framework for Tracking Communities in Evolving Networks]{A Spectral Framework for Tracking Communities\\ in Evolving Networks}
\author[J. Hume and L. Balzano]{%
Jacob Hume\\
University of Cambridge\thanks{\newnew{Work completed while the author was at Michigan.}}\\
\email{jakehume@umich.edu}\And
Laura Balzano \\
University of Michigan, Ann Arbor\\
\email{girasole@umich.edu}
}
\begin{document}

\maketitle

\begin{abstract}
Discovering and tracking communities in time-varying networks is an important task in network science, motivated by applications in fields ranging from neuroscience to sociology.  In this work, we characterize the celebrated family of spectral methods for static clustering in terms of the low-rank approximation of high-dimensional node embeddings. 
From this perspective, it becomes natural to view the evolving community detection problem as one of subspace tracking on the Grassmann manifold. 
While the resulting optimization problem is nonconvex, we adopt a recently proposed block majorize-minimize Riemannian optimization scheme to learn the Grassmann geodesic which best fits the data. Our framework generalizes any static spectral community detection approach and leads to algorithms achieving favorable performance on synthetic and real temporal networks, including those that are weighted, signed, directed, mixed-membership, multiview, hierarchical, cocommunity-structured, bipartite, or some combination thereof.
We demonstrate how to specifically cast a wide variety of methods into our framework, and demonstrate greatly improved dynamic community detection results in all cases.

\end{abstract}

\section{Introduction}
Consider a multiplex graph whose layers represent snapshots $G_{1},\dots,G_{T}$ of a time-varying network with $d$ nodes. 
The goal is to partition each $G_{i}$ into communities — node groupings of relatively higher intraconnectivity — in a manner that is temporally \new{coherent} and robust to noise \cite{chakrabarti_evolutionary_2006}. 
This problem of \textit{evolving community detection} has recently found myriad applications. 
For example, in social and computer networks, the continual estimation of community structure allows for robust containment of infection \cite{nguyen_dynamic_2014}.  
In brain networks, communities are believed to represent node collections dedicated to specialized functions such as memory or vision \cite{bullmore_economy_2012}; tracking changes (induced e.g. via hormonal fluctuations) in these collections illuminates how physiological factors influence cognition \cite{mueller_dynamic_2021}.

Many classical approaches to evolving community detection behave locally, matching network topology and/or partitions across adjacent snapshots \cite{cazabet_dynamic_2017}. 
More recently, a number of \textit{global} \cite{gauvin_detecting_2014, christopoulos_state_2022}, or \textit{cross-time} \cite{rossetti_community_2018}, approaches have arisen in the offline setting. 
These methods utilize information across all snapshots to obtain more stable communities of long-term temporal coherence \cite{rossetti_community_2018,cazabet_dynamic_2017}. 
Numerous techniques for static community detection have been extended to this global setting \cite{gauvin_detecting_2014, sun_community_2020, ghasemian_detectability_2016}. 
However, the well-studied family of \textit{spectral methods}, approaches based on clustering Euclidean node embeddings derived from the extreme end of a real matrix's spectrum \cite{alpert_spectral_1995, newman_finding_2006, zhang_multiway_2015, white_spectral_2005,  shi_normalized_2000, ng_spectral_2001, saade_spectral_2014}, is a collection of notable exceptions. \new{In addition to being among the most competitive \cite{yang2016comparative}, scalable \cite{zhang_detecting_2020}, and theoretically rich \cite{belkin_laplacian_2003, bolla_spectral_2015, fasino_algebraic_2014, chung_spectral_1997} approaches to static community detection, spectral methods are also among the most general, with the above definition including techniques for clustering signed \cite{mercado_spectral_2019, mercado_clustering_2016, kunegis_spectral_2010}, mixed-membership \cite{wahl_hierarchical_2015, zhang_detecting_2020}, directed \cite{satuluri_symmetrizations_2011, zhou_learning_2005}, multiview \cite{mercado_spectral_2019, dong_clustering_2013}, cocommunity-structured \cite{rohe_co_2016}, hierarchical \cite{laenen_nearly_2023}, hyper~\cite{zhou_learning_2006}, motif-based~\cite{underwood_motif_2020}, and higher-order \cite{grande2023topological, krishnagopal_spectral_2021} networks. It is thus of great interest to develop temporal methods for spectral community detection inheriting these properties.}

\paragraph{Contributions} We study a natural and versatile framework for temporally generalizing spectral methods for static community detection in a global fashion. 
Our approach is based on the observation (Section~\ref{sec:proposed-framework}) that any spectral method may be characterized in terms of a least squares-optimal \textit{low-rank approximation of some matrix} $\boldsymbol{M}$. 
In this light, it becomes natural to formulate the evolving spectral community detection problem using dynamics on the Grassmann manifold $\text{Gr}(d,k)$ of $k$-dimensional linear subspaces of $\mathbb{R}^d$ \cite{boumal_introduction_2023}. 
Identifying $\{G_i\}_{i=1}^T$ with a discrete trajectory on $\text{Gr}(d,k)$, the aforementioned notions of `stability' and `temporal coherence' may be interpreted as constraining this subspace trajectory to lie on some highly regular Grassmann curve. 
Recent subspace estimation literature \cite{blocker_dynamic_2023, hong_parametric_2016} suggests geodesic curves to be an elegant and effective candidate. 
Our approach is to thus learn the Grassmann geodesic that best models the data, capturing a robust continuum of node embeddings from which a sequence of clusterings may be computed. \new{Advantages include}:

\begin{itemize}
    \item Any community detection method based on the leading or trailing eigenvectors of a matrix admits temporal generalization, as shown in Section~\ref{sec:instantiations}. 
    This is achieved using  subspace estimation methods that leverage the regularity of Grassmann geodesics. 
    The proposed framework provides algorithms for community detection in a vast array of evolving networks, including evolving networks with weighted, signed, and/or directed edges, evolving networks with overlapping, hierarchical, or cocommunity structure, and evolving networks with multiple views (Table~\ref{tab:network-modalities}).
    \item \new{If the desired number of communities $k_c$ is known in advance, the framework produces algorithms capable of utilizing this information (Algorithm~\ref{alg:geodesic-dcd}). If not, a simple yet effective extension can ascertain $k_c$ at each time step automatically (Figure~\ref{fig:real}, Algorithm~\ref{alg:geodesic-dcd-variable-k})}.
    \item An intuitive heuristic (proven in Section \ref{sec:experiments}) allows practitioners to visualize the extent to which their data satisfies the framework's core assumption regarding Grassmann geodesic structure when $k_c=2$. Empirically, it indicates that smooth temporal evolution guarantees the presence of approximately geodesic structure in a dynamic stochastic block model (dynamic SBM), affirming that the geodesic assumption is a natural one.

    \item 
    The proposed method is remarkable for the performance gains it achieves across the diverse array of networks mentioned above.
    \new{For all network types considered, the adjusted mutual information and/or element-centric similarity} between the estimated and true dynamic communities is above 0.8 — and often very near 1.0 (perfect recovery) — on appropriate variants of the dynamic SBM and on real data. This represents significant gains over the static counterparts and other dynamic methods. The approach is robust to noise, reliably recovering planted community structure in \new{noise regimes} for which the corresponding static methods perform around random chance.
\end{itemize}

\begin{table}
\begin{tabular*}{5.5in}{@{\extracolsep{\fill}} c p{0.75\linewidth}}
\hline
\textbf{Notation} & \textbf{Description} \\
\hline
$G$ & A (simple, directed, multiview...) graph with $d$ nodes, suitably connected  \\
$\bs{A}$ & The (possibly weighted) adjacency matrix of $G$  \\
$\operatorname{deg}(\ell)$ & The degree of a node $\ell$ of $G$ defined as $\operatorname{deg}(\ell) = \sum_{j \neq \ell} A_{\ell j}$ \\
$\operatorname{vol}(S)$ & The volume of a node subset $S$,  defined as $\operatorname{vol}(S) = \sum_{v \in S} \operatorname{deg}(v)$ \\
$\bs{D}$ & The degree matrix $\operatorname{diag}(\operatorname{deg}(1), \dots, \operatorname{deg}(d))$ of $G$.\\
$\lambda(\bs{H}); \lambda_i(\bs H)$ & The spectrum of a matrix $\bs{H}$; the $i$th largest eigenvalue of $\bs H$. \\
$\sigma(\bs{H}); \sigma_i(\bs H)$ & The singular values of a matrix $\bs{H}$; the $i$th largest singular value of $\bs H$.\\

$\langle \bs{H} \rangle$ & The span of a matrix or vector $\bs{H}$ \\
$\text{St}(d,k)$ & The Stiefel manifold of matrices in $\mathbb{R}^{d \times k}$ with orthonormal columns \\
$\mathrm{O}_k$ & The orthogonal group of matrices in $\mathbb{R}^{k \times k}$ with orthonormal columns
\\
$\text{Gr}(d,k)$ & The Grassmann manifold of $k$-dimensional linear subspaces of $\mathbb{R}^d$ \\
$\mathscr{A}$ & An algorithm which clusters $d$ nodes with Euclidean embeddings into $\mathbb{R}^k$ derived from the extreme of a matrix spectrum (a \textit{static spectral method})
\\
$\bs M_k = \bs U_k \bs \Sigma_k \bs V_k^\top$ & The rank-$k$ singular value decomposition $\bs U_k(\bs M) \bs \Sigma_k(\bs M) \bs V_k(\bs M)^\top$ of $\bs M$ \\
MCM of $\mathscr{A}$ & A matrix $\bs M$ satisfying $\langle \bs U_k(\bs M) \rangle = \langle \bs C \rangle$, where $\bs C$'s rows are the spectral embeddings from $\mathscr{A}$ into $\mathbb{R}^k$. Called a \textit{modeled clustering matrix (MCM)}\\
\hline
\end{tabular*}
\end{table}


\paragraph{Related Work} Much early work on evolving community detection focuses on the online setting via the following `two-stage approach \cite{cazabet_dynamic_2017}': 

\vspace{-2mm}
\begin{enumerate}
    \item Detect communities per-snapshot using a static method, such as \cite{newman_finding_2004, blondel_fast_2008, rosvall_maps_2008, karrer_stochastic_2011} ;
\item Match communities across adjacent snapshots; we will call this \textit{local temporal smoothing} \cite{rossetti_community_2018}.  \label{local-temporal-smoothing} 
\end{enumerate}
For example, \cite{aynaud_static_2010} applies the Louvain method \cite{blondel_fast_2008} at each snapshot, imposing local temporal smoothing by initializing the method at snapshot $t$ with results from snapshot $t-1$.  The technique in \cite{morini_revealing_2017} computes similarity scores between the static communities detected at time $t$ and those at $t \pm 1,2$ and uses a `network sliding window' to smoothen away noise. In terms of spectral methods, a local temporal smoothing approach to spectral clustering is offered in \cite{chi_evolutionary_2007}. The authors apply spectral clustering per-snapshot, but with an added `temporal cost' parameter to the normalized cut objective function that quantifies how well a partition at time $t$ clusters the data at time $t-1$. They then relax their objective in a manner similar to that deriving static spectral clustering \cite{von_tutorial_2007}, and obtain an optimal solution to this relaxation at each time step in terms of eigenvectors of a new, `temporally smoothed' matrix. In what may be viewed as a substantial offline extension of this, \cite{liu_global_2018} offers an approach to temporal spectral clustering via the simultaneous estimation of $T$ smoothened Laplacian eigenbases $\bar{\boldsymbol{U}}_i$, $i \in [T]$, which are encouraged to be close both to their static counterpart $\boldsymbol{U}_i$ and smoothened predecessor $\bar{\boldsymbol{U}}_{i-1}$. This approach is `global' in the sense that its objective function aggregates community information across all time points, and hence estimates all communities simultaneously. At the same time, it is `local' in the sense that the objective only considers relationships between data in directly adjacent snapshots. In offline contexts where the full graph sequence is known \textit{a priori}, it is natural to pursue methods capable of capturing long-term temporal correlations \cite{christopoulos_state_2022}. Recent such global approaches include \cite{gauvin_detecting_2014}, which concatenates adjacency matrix snapshots into a rank-3 tensor and applies nonnegative tensor factorization techniques, generalizing \cite{yang_overlapping_2013}; \cite{sun_dynamic_2022}, which offers a dynamic model based on preferential attachment phenomena, generalizing \cite{sun_community_2020}; and 
\cite{ghasemian_detectability_2016}, which applies methods of statistical inference to the dynamic stochastic block model \cite{matias_statistical_2017}, generalizing techniques surveyed in \cite{abbe_community_2017}. Analogously, our work simultaneously generalizes to the global temporal setting methods such as spectral modularity maximization \cite{newman_finding_2006, white_spectral_2005, zhang_multiway_2015}, (un)normalized spectral clustering/graph partitioning  \cite{shi_normalized_2000, ng_spectral_2001, pothen_partitioning_1990}, Bethe Hessian clustering \cite{saade_spectral_2014}, and the multitudinous extensions of these to different network modalities. Different spectral approaches find utility in different contexts, and we envision accordingly diverse applications for our framework. All proofs are deferred to the appendix.

\vspace{-4mm}

\section{Proposed Framework}
\label{sec:proposed-framework}
Our proposed framework leverages the general assumption of spectral methods that the extremal eigenvectors of a given matrix provide a good embedding for clustering graph nodes \cite{alpert_spectral_1995, newman_finding_2006, zhang_multiway_2015, white_spectral_2005,  shi_normalized_2000, ng_spectral_2001, saade_spectral_2014}. 
We will show that these eigenvectors also span a subspace of best low-rank approximation to a modification of said matrix, and low-dimensional subspaces have a natural dynamic extension with curves on the Grassmann manifold of \new{$k$-dimensional subspaces \newnew{of $\mathbb{R}^d$}, which forms a compact \newnew{Riemannian manifold with metric inherited from} Euclidean space~\cite{boumal_introduction_2023}} . 
Our framework produces an extension of any static spectral method to the time-varying setting by applying a recent novel algorithm for fitting Grassmann geodesics \cite{blocker_dynamic_2023}. In this section, we first discuss the general template for spectral community detection in a single network instance.
A key feature is the connection of a subspace for node embeddings to a low-rank matrix approximation.
We then give our proposed method, which uses ideas from subspace tracking to guarantee those subspace node embeddings are smooth in time. 


Let $\{{G}_i\}_{i=1}^T$ be a multiplex graph whose layers represent snapshots of a time-varying network with $d$ nodes. The goal is to partition each $\boldsymbol{G}_i$ into $k$ communities — divisions into groups of increased connectivity \cite{newman_spectral_2013} — in a temporally coherent manner. 
For the case $T=1$, a number of successful \textit{static} community detection methods exist. 
Especially well-studied are the \textit{spectral methods}, which derive Euclidean node embeddings from the eigenvectors of a \textit{clustering matrix} $\bs R$, then employ a Euclidean clustering approach to obtain community assignments. Our proposed method relies on the fact that a majority of spectral methods may be (re)formulated per the following template, a claim we substantiate in Section~\ref{sec:instantiations} following the outline of our method in terms of said template in Section~\ref{sec:proposed-method}.
\begin{algorithm}[H]
\caption{Template for Spectral Community Detection in Static Networks}
\label{alg:static-scd-template}
\begin{algorithmic}[1]
\Require Graph $G$, number $k_c$ of communities to detect, embedding dimension $k_e$
\Require Spectral algorithm $\mathscr{A}$ for community detection, with clustering matrix $\boldsymbol{R}$ 
\State \textbf{Embedding:} Embed the nodes of $G$ into $\mathbb{R}^d$ as columns of some matrix $\bs M=\bs M(\boldsymbol{R}) \in \mathbb{R}^{d \times d}$ 

\State \textbf{Spatial denoising:} 
\label{step:low-rank-step}
Compress the node embeddings into a linear subspace $\mathcal{U} = \langle \bs U \rangle \in {\text{Gr}(d,k_e)}$,  $\bs U \in \text{St}(d,k_e)$, such that the reconstruction error $\|\bs M - \bs U \bs U^\top \bs M\|_{\mathrm{F}}^2$ is minimized

\State \textbf{Euclidean clustering:} 
\label{step:Euclidean-clustering}
Follow $\mathscr{A}$'s specifications to derive community assignments from $\boldsymbol{U}$

\Ensure An assignment $(Z_1,\dots,Z_k)$ of $G$'s nodes into $k_c$ communities. 

\end{algorithmic}
\end{algorithm}
The spectral nature of $\mathscr{A}$ is encoded in the subspace $\mathcal{U} \in \text{Gr}(d,k)$, as this subspace is represented by the eigenbasis $\bs U_k \in {\text{St}(d,k)}$ for $\bs M_k=\bs U_k \bs \Sigma_k \bs V_k^\top$ a rank-$k$ approximation to $\bs M$. We note, though, that any choice of orthonormal basis $\bs U$ of $\mathcal{U}$ suffices.\footnote{Technically, it is assumed here that the Euclidean clustering method of $\mathscr{A}$ (step~\ref{step:Euclidean-clustering}, e.g. $k$-means) is invariant under linear isometry of the input space. This is true for all spectral methods of which the authors are aware.} 
We call the matrix $\boldsymbol{M}$ a \textit{modeled clustering matrix (MCM) of $\mathscr{A}$}, for its defining property is that the rank-$k$ subspace $\langle \boldsymbol U_{k} \rangle$ optimally modeling its columns is provided by spectral embeddings from $\mathscr{A}$.

Constructing an MCM for a given spectral method $\mathscr{A}$ is not difficult; indeed,  Section~\ref{sec:instantiations} will show that choosing $\boldsymbol{M}=\boldsymbol{I} \pm \boldsymbol{R}/\|\boldsymbol{R}\|_{\mathrm{F}}$ suffices for all spectral methods of which the authors are aware, although other choices may yield more elegant interpretations. Finally, we remark that $k_c=k_e$ for the vast majority of algorithms. In principle they can differ, though. For example, when $k_c=2$ it is common to choose $k_e=1$, usually because $\mathscr{A}$ admits some natural derivation for this special case which exploits the \newnew{algebraic structure} of $\mathbb{R}$ in step~\ref{step:Euclidean-clustering} \cite{newman_finding_2006, von_tutorial_2007}. It has also been argued that choosing $k_e \geq k_c$ is superior for some $\mathscr{A}$ \cite{zhang_multiway_2015, alpert_spectral_1995, rebagliati2011spectral}.

\vspace{-2mm}








\subsection{Proposed Method} \label{sec:proposed-method} Now assume we are given a time-varying graph $\{G_i\}_{i=1}^T$, with the goal of recovering the true community structure at each time step in a manner that is robust to both spatial (i.e., in the high-dimensional embedding space $\mathbb{R}^d$) and temporal noise. 
The linear subspace modeling of step~\ref{step:low-rank-step} in Algorithm~\ref{alg:static-scd-template} aims to handle the former, but a regularity constraint must be applied to the sequence $\{\langle \bs U_i \rangle\}_{i=1}^T$ of subspaces in order to attain the latter. 
A natural choice is to seek a geodesic of best fit for the data on the Grassmann manifold, either by fitting a geodesic directly through the points $\{\langle \bs U_i \rangle\}_{i=1}^T$ or by learning a Grassmann curve whose $i$th sample  approximates $\bs M_i$ while obeying a geodesic-constrained trajectory. The former may be viewed as spatial denoising followed by temporal denoising; the latter as simultaneous spatiotemporal denoising. Our method can accommodate either interpretation, but we will limit focus to the latter.


Geodesics on the Grassmann manifold, like lines in Euclidean space, behave as parsimonious interpolating curves. A geodesic  $\bs U:[0,1] \to \text{Gr}(d,k)$ with starting point $\langle \bs H \rangle$, $\bs H \in \text{St}(d,k)$, in the direction of $\bs Y \in T_{\langle \bs H \rangle} \text{Gr}(d,k)$, may be parameterized as \cite{bendokat_grassmann_2024} \begin{equation}
    \bs U(t; \bs H, \bs Y, \bs \Theta) = \overbrace{\begin{bmatrix}\bs H & \bs Y\end{bmatrix}}^{=: \bs P}\overbrace{\begin{bmatrix}
        \cos(\bs \Theta t) \\ \sin(\bs \Theta t) 
    \end{bmatrix}}^{=: \bs C(t)}= \bs P \bs C(t),
\end{equation}
where $\bs \Theta \in \text{diag}(\mathbb{R} ^{k \times k})$ consists of principal angles between geodesic endpoints.\footnote{That is, $\boldsymbol{\Theta}=\cos^{-1}(\boldsymbol{S})$, where $\boldsymbol{S}$ is obtained via the singular value decomposition $\boldsymbol{Z} \boldsymbol{S} \boldsymbol{Q}^\top=\boldsymbol{U}(0)^{\top} \boldsymbol{U}(1)$.}  
Our goal is to learn geodesic parameters such that the aggregated reconstruction error \begin{align}
    \mathcal{L}\big(\bs U(t; \bs H,\bs Y,\bs \Theta) \big)= \sum_{i=1}^{T}\|\bs M_{i}-\bs U(t_{i})\bs U(t_{i})^{\top}\bs M_{i}\|_{\text{F}}^{2}\label{geodesic-objective}
\end{align} is minimized. Here, $t_i$ is the continuous time point assigned by the user to snapshot index $i$. For simplicity, we always assume $t_1, \dots, t_T$ are equally spaced along $[0,1]$. Unlike Euclidean lines, no closed-form solution for geodesic regression on the Grassmann manifold is known, and moreover the objective (\ref{geodesic-objective}) is nonconvex.
Following \cite{blocker_dynamic_2023}, we attempt to minimize ($\ref{geodesic-objective}$) via block coordinate descent alternating between optimizing $\bs P$ and optimizing $\bs \Theta$.
We include the algorithm steps here for completeness, and their derivations may be found in \cite{blocker_dynamic_2023}. 
We adopt this approach because it is hyperparameter-free and each update monotonically descends (\ref{geodesic-objective}) while converging to a global optimum in a majority of \cite{blocker_dynamic_2023}'s experiments. Additionally, the work in \cite{li2023convergence} \new{analyzed this algorithm and proved convergence to a stationary point.} 


\paragraph{$\bs P$ Update.} Assume $\bs \Theta$ is fixed along with prior iterate $\bs P^{(n)}$. Then (\ref{geodesic-objective}) is minimized by setting \begin{equation}
\label{eqn:P-update}
    \bs P^{(n+1)} = \bs W \bs V^\top,
\end{equation}
where $\bs W$ and $\bs V$ are obtained from the singular value decomposition $\bs W \bs \Sigma \bs V^\top$ of the $d \times 2k$ matrix

\vspace{-4mm}
\begin{equation}
    \sum_{i=1}^T \begin{bmatrix} \bs M_i \bs M_i^\top \bs P^{(n)} \bs D_i \cos (\bs \Theta t_i) & \bs M_i \bs M_i^\top \bs P^{(n)} \bs D_i \sin (\bs \Theta t_i)\end{bmatrix}
\end{equation}
\vspace{-2mm}

\paragraph{$\bs \Theta$ Update.} Suppose $\bs P$ is fixed along with prior iterate $\bs \Theta^{(n)}$. In contrast to the above, when $\boldsymbol \Theta$ is fixed and $\boldsymbol{P}$ is to be updated, no analytic expression for a minimizer of (\ref{geodesic-objective}) is known in this case. However, the objective is separable in the entries $\theta_1, \dots, \theta_k$ of $\boldsymbol{\Theta}$, and they can each be iteratively updated over $M$ iterations as \begin{equation}
    \theta_j^{(0)}:=[\boldsymbol{\Theta}^{(n)}]_{jj}; \ \theta_{j}^{(m+1)}=\theta_{j}^{(m)}-s^{(m)}{\sum_{i=1}^{T}\dot{f}_{i,j}\big(\theta_{j}^{(m)}\big)}, \text{ where }s^{(m)}=1 / {\sum_{i=1}^{T}w_{f_{i,j}}\big(\theta_{j}^{(m)}\big)}
\end{equation} may be interpreted as a (variable) gradient descent step size and \begin{align}
\dot{f}_{i,j}(\theta_{j})= &  \frac{t_{i}\sqrt{  {  (\alpha_{i,j}-\gamma_{i,j})^{2}} + 4 \beta_{i,j}^{2} }}{2}  \sin \left(2 \theta_{j} t_{i} - \phi_{i,j}\right) \quad ;  \\
w_{f_{i,j}}(\theta_{j}) = & \frac{\dot{f}_{i,j}(\theta_{j})}{-\frac{\pi}{2t_{i}}+ \left( \theta_{j} - \frac{\phi_{i,j}+\pi}{2t_{i}}   \right) \operatorname{mod} \frac{2\pi}{2t_{i}}}
\end{align}
for \begin{align}
\phi_{i,j}&=\operatorname{arctan2}\left(\beta_{i,j}, \frac{\alpha_{i,j} - \gamma_{i,j}}{2}\right) & \beta_{i,j}&=[\boldsymbol  Y^{\top} \boldsymbol  M_{i} \boldsymbol  M_{i}\boldsymbol  H]_{jj}\\
\alpha_{i,j} &= [\boldsymbol  H^{\top} \boldsymbol  M_{i} \boldsymbol  M_{i}^{\top} \boldsymbol  H]_{jj} & \gamma_{i,j}&= [\boldsymbol  Y^{\top} \boldsymbol  M_{i} \boldsymbol  M_{i} \boldsymbol  Y]_{jj}.
\end{align}
Here, $\operatorname{arctan2}(y,x)$ denotes the angle swept by the $x$-axis in $\mathbb{R}^2$ as it rotates counter-clockwise into the point $(x,y)$. We then set \begin{equation}
\label{eqn:Theta-update}
\boldsymbol{\Theta}^{(n+1)}=\operatorname{diag}(\theta_1^{(M)}, \dots, \theta_k^{(M)}).
\end{equation}

\paragraph{Initialization} Choosing $\bs P^{(0)}$ and $\bs \Theta^{(0)}$ amounts to choosing an initial geodesic to iteratively optimize. This in turn amounts to choosing geodesic endpoints based on the data $\{\bs M_i\}_{i=1}^T$. For our problem setting, we choose the rank-$k$ least squares-optimal subspaces approximating $\bs M_1$ and $\bs M_T$ — obtained via $k$-truncated singular value decompositions $\boldsymbol{M}_1=\bs H_1 \bs \Sigma_1 \bs K_1^\top$ and $\boldsymbol{M}_T=\bs H_T \bs \Sigma_T \bs K_T^\top$ — as said geodesic's initial point and end point respectively. This determines the direction $\bs Y$ to be $\boldsymbol{F}\boldsymbol{G}^\top$, where, writing $\boldsymbol Z \boldsymbol S \boldsymbol Q^{\top}=\boldsymbol H_1^{\top}\boldsymbol H_T$, $\boldsymbol F$ and $\boldsymbol G^{\top}$ come from a singular value decomposition $(\boldsymbol I- \boldsymbol H_1 \boldsymbol H_1^{\top} )\boldsymbol H_{T} \boldsymbol Q=\boldsymbol F \boldsymbol D \boldsymbol G^{\top}$. Thus, we initialize with \begin{equation}
\label{eqn:opt-init}
    \boldsymbol{P}^{(0)} := \begin{bmatrix}
        \boldsymbol{H}_1 & \boldsymbol{F} \boldsymbol{G}^{\top}
    \end{bmatrix} \quad \quad \boldsymbol{\Theta}^{(0)}:=\cos^{-1}(\boldsymbol{H}_1^\top \boldsymbol{H}_T).
\end{equation}


\begin{algorithm}[H]
\renewcommand{\algorithmicrequire}{\textbf{Input:}}
\renewcommand{\algorithmicensure}{\textbf{Output:}}
\caption{Evolving Community Detection with Grassmann Geodesics}
\label{alg:geodesic-dcd}
\begin{algorithmic}[1]
\Require Graphs and associated snapshot times $\{ G_i, t_i \}_{i=1}^{T}$, number $k_c$ of communities to track, dynamic embedding dimension $k_e$
\Require Static spectral method $\mathscr{A}$ admitting an MCM (i.e., $\mathscr{A}$ formulated as in algorithm~\ref{alg:static-scd-template})
\State \textbf{Embedding:} Follow $\mathscr{A}$ to form high-dimensional node embeddings as MCMs $\{ \boldsymbol M_i \}_{i=1}^{T}$ 
\State \textbf{Spatiotemporal denoising} Estimate a $\text{Gr}(d, k_e)$-geodesic ${\bs{U}}(t; \bs H, \bs Y,\bs \Theta)$ via the minimization of (\ref{geodesic-objective}) through alternating iterates (\ref{eqn:P-update}) and (\ref{eqn:Theta-update}), initialized per (\ref{eqn:opt-init}) 
\State \textbf{Euclidean clustering:} For each snapshot index $i$, follow $\mathscr{A}$  to obtain community assignments from low-dimensional Euclidean node embeddings derived from $\bs U(t_i)$
\Ensure Assignments $(Z_1 \dots, Z_{k_c})_i$ of the nodes in each of $\{G_i\}_{i=1}^T$ to communities
\end{algorithmic}
\end{algorithm}

\begin{table}
\caption{Example applications of static community detection across different network modalities, and associated spectral methods for pursuing it. SC abbreviates \textit{spectral clustering}.}
\label{tab:network-modalities}
\centering
\renewcommand{\arraystretch}{1.2}
\begin{tabular}{|p{0.13\textwidth}|p{0.26\textwidth}|p{0.52\textwidth}|}
\hline
\textbf{Modality} & \textbf{Example Application(s)} & \textbf{Methods} \\
\hline
\multirow{5}{*}[2.4ex]{\centering Simple} & Image segmentation \hfill\mbox{\footnotesize\cite{shi_normalized_2000}} & (Un)Normalized SC \hfill\mbox{(USC)(NSC) \footnotesize\cite{shi_normalized_2000, ng_spectral_2001}} \\
\cline{2-3}
& Brain connectivity \hfill\mbox{\footnotesize\cite{bullmore_complex_2009}} & Spectral Modularity Maximization \hfill\mbox{(SMM) \footnotesize\cite{white_spectral_2005, zhang_multiway_2015, newman_finding_2006}} \\
\cline{3-3}
& & Bethe Hessian Clustering \hfill\mbox{(BHC) \footnotesize\cite{saade_spectral_2014}} \\
\hline
\multirow{4}{*}[2.4ex]{\centering Signed} & Time series \hfill\mbox{\footnotesize\cite{aghabozorgi_time_2015}} & Signed Ratio SC \hfill\mbox{(SRSC) \footnotesize\cite{kunegis_spectral_2010}} \\
\cline{2-3}
& Voting networks \hfill\mbox{\footnotesize\cite{west_exploiting_2014}} & Geometric Mean SC \hfill\mbox{(GMSC) \footnotesize\cite{mercado_clustering_2016}} \\
\cline{3-3}
& & Signed Power Mean SC \hfill\mbox{(SPMSC) \footnotesize\cite{mercado_spectral_2019}} \\
\hline
\multirow{2}{*}{Overlapping} & Social networks \hfill\mbox{\footnotesize\cite{devi_analysis_2016}} & Overlapping SC \hfill\mbox{(OSC) \footnotesize\cite{zhang_detecting_2020}} \\
\cline{2-3}
& Neuronal networks \hfill\mbox{\footnotesize\cite{kim_detecting_2015}} & $c$-means SC \hfill\mbox{(CSC) \footnotesize\cite{wahl_hierarchical_2015}} \\
\hline
\multirow{4}{*}[2.4ex]{\centering Directed} & Genomics \hfill\mbox{\footnotesize\cite{popa_directed_2011}} & Degree-Discounted SC \hfill\mbox{(DDSC) \footnotesize\cite{satuluri_symmetrizations_2011}} \\
\cline{2-3}
& Information networks \hfill\mbox{\footnotesize\cite{huang_web_2006}} & Bibliographic SC \hfill\mbox{(BSC) \footnotesize\cite{satuluri_symmetrizations_2011}} \\
\cline{3-3}
& & Random Walk Directed SC \hfill\mbox{(RWSC) \footnotesize\cite{zhou_learning_2005}} \\
\hline
\multirow{2}{*}{Multiview} & Measuring networks \hfill\mbox{\footnotesize\cite{newman_network_2018}} & Grassmannian Multiview SC \hfill\mbox{(GMVSC) \footnotesize\cite{dong_clustering_2013}} \\
\cline{2-3}
& Multimedia analysis \hfill\mbox{\footnotesize\cite{chao_survey_2021}} & Power Mean Laplacian SC \hfill\mbox{(PMLSC) \footnotesize\cite{mercado_spectral_2019}} \\
\hline
Cocommunity & NLP \hfill\mbox{\footnotesize\cite{dhillon_co_2001}} & Spectral Coclustering \hfill\mbox{(SCC) \footnotesize\cite{rohe_co_2016}} \\
\hline
\multirow{2}{*}{Hierarchical} & Biochemical graphs \hfill\mbox{\footnotesize\cite{ravasz_hierarchical_2002}} & Hierarchical SC \hfill\mbox{(HSC) \footnotesize\cite{laenen_nearly_2023}} \\
\cline{2-2}
& Brain networks \hfill\mbox{\footnotesize\cite{ashourvan_multi_2019}} & \\
\hline
\end{tabular}
\end{table}

\subsection{Instantiations}
\label{sec:instantiations}
This section illustrates the generality of algorithm~\ref{alg:static-scd-template} by way of several examples, thereby providing a collection of static community detection methods extended by algorithm template~\ref{alg:geodesic-dcd} to the temporal setting. The (possibly weighted) simple network cases of spectral clustering and spectral modularity maximization are a faithful prototype for the exploration of other techniques using our framework. We therefore restrict present attention to these methods, with analogous discussion of other methods deferred to Appendix \ref{sec:instantation-detail}.
Table~\ref{tab:network-modalities} summarizes all static spectral methods that we analyze and evaluate. To begin, we observe that any spectral method based on the extremal eigenvectors of a clustering matrix $\boldsymbol R$ easily admits an MCM $\boldsymbol M$, simply via the normalization, then shifting, of its spectrum. 

\begin{proposition}
    \label{prop:mcm-prop}
    Let $\mathscr{A}$ be a static spectral algorithm with clustering matrix $\boldsymbol R$. If $\mathscr{A}$ clusters using the leading (resp. trailing) eigenvectors of $\boldsymbol R$, then $\bs M=\boldsymbol I + \boldsymbol R/\|\boldsymbol R\|_{\mathrm{F}}$ (resp. $\boldsymbol I - \boldsymbol R /\|\boldsymbol R\|_{\mathrm{F}}$) is an MCM for $\mathscr{A}$ (cf. Algorithm~\ref{alg:static-scd-template}).
\end{proposition}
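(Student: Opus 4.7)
The plan is to exploit the fact that shifting a matrix by a multiple of the identity preserves its eigenvectors while affinely transforming its eigenvalues, and that the normalization by $\|\boldsymbol R\|_{\mathrm F}$ is precisely calibrated to render the resulting matrix positive semidefinite. Assuming (as is standard for spectral methods taking eigenvectors) that $\boldsymbol R$ is symmetric with spectral decomposition $\boldsymbol R = \sum_i \lambda_i \boldsymbol v_i \boldsymbol v_i^\top$, I would first write
\[
\boldsymbol M = \boldsymbol I \pm \frac{\boldsymbol R}{\|\boldsymbol R\|_{\mathrm F}} = \sum_i \left(1 \pm \frac{\lambda_i}{\|\boldsymbol R\|_{\mathrm F}}\right) \boldsymbol v_i \boldsymbol v_i^\top,
\]
so that $\boldsymbol M$ is symmetric with the same eigenvectors as $\boldsymbol R$.

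Next, I would verify that $\boldsymbol M$ is positive semidefinite. Since $\|\boldsymbol R\|_{\mathrm F}^2 = \sum_j \lambda_j^2 \geq \lambda_i^2$, one has $|\lambda_i|/\|\boldsymbol R\|_{\mathrm F} \leq 1$, whence every shifted eigenvalue $1 \pm \lambda_i/\|\boldsymbol R\|_{\mathrm F}$ lies in $[0,2]$. Because a symmetric PSD matrix has an SVD that coincides with its eigendecomposition (singular values equal eigenvalues, and left singular vectors equal eigenvectors, up to the usual rotational ambiguity within repeated-spectrum eigenspaces), the sorted singular values of $\boldsymbol M$ are exactly the sorted values of $\{1 \pm \lambda_i/\|\boldsymbol R\|_{\mathrm F}\}$, with matching singular vectors.

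The final step is an order-preservation argument. In the $+$ case, $\lambda \mapsto 1 + \lambda/\|\boldsymbol R\|_{\mathrm F}$ is strictly increasing, so the top-$k$ eigenvalues of $\boldsymbol M$ — and hence its top-$k$ singular values — come from the top-$k$ eigenvalues of $\boldsymbol R$; thus $\langle \boldsymbol U_k(\boldsymbol M)\rangle$ is spanned by the leading $k$ eigenvectors of $\boldsymbol R$, which by hypothesis is the span $\langle \boldsymbol C \rangle$ of the spectral embeddings produced by $\mathscr A$. In the $-$ case, $\lambda \mapsto 1 - \lambda/\|\boldsymbol R\|_{\mathrm F}$ is strictly decreasing, so the top-$k$ singular vectors of $\boldsymbol M$ coincide with the trailing $k$ eigenvectors of $\boldsymbol R$, again matching $\langle \boldsymbol C\rangle$. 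Either way, $\boldsymbol M$ satisfies the defining MCM identity $\langle \boldsymbol U_k(\boldsymbol M)\rangle = \langle \boldsymbol C\rangle$.

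The only subtle point — and what I would flag as the main thing to get right — is the interplay between signed eigenvalues of $\boldsymbol R$ and nonnegative singular values of $\boldsymbol M$. Without the identity shift, the singular values of $\boldsymbol R/\|\boldsymbol R\|_{\mathrm F}$ would be $|\lambda_i|/\|\boldsymbol R\|_{\mathrm F}$, and the ordering could scramble leading and trailing eigenvectors of $\boldsymbol R$ whenever it has large-magnitude eigenvalues of both signs; the $\boldsymbol I$ shift is precisely what prevents this by pushing the entire spectrum into $[0,2]$ and making the resulting eigenvalue-to-singular-value map monotone. A brief remark handling the degenerate boundary case where some $1 \pm \lambda_i/\|\boldsymbol R\|_{\mathrm F}$ vanishes (which forces all other eigenvalues to be zero and is of no practical consequence), together with the standard observation that eigenspaces of repeated eigenvalues yield the same span regardless of basis choice, completes the argument.
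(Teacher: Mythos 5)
Your proposal is correct and follows essentially the same route as the paper's proof: normalize the spectrum into $[-1,1]$ via the Frobenius-norm bound $|\lambda_i| \leq \|\boldsymbol R\|_{\mathrm F}$, shift by $\boldsymbol I$ to obtain a positive semidefinite matrix whose singular vectors coincide with its eigenvectors, and observe that the affine map on eigenvalues is monotone (increasing for $+$, decreasing for $-$) so the top singular subspace of $\boldsymbol M$ matches the leading (resp.\ trailing) eigenspace of $\boldsymbol R$. You simply spell out the order-preservation and boundary details that the paper leaves implicit.
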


Some alternative choices of $\boldsymbol M$ come with extra interpretations. See e.g. propositions \ref{prop:prop-low-rank-unnormalized-sc} and \ref{prop:signless-lapl} below. Proposition~\ref{prop:mcm-prop} suffices to generate MCMs for all spectral methods considered in this text, though, and will be employed by default when no alternative is specified. 


%



\paragraph{Unnormalized Spectral Clustering (USC)}
Let $\bs{A}$ be the (possibly weighted) adjacency matrix and $\bs{D}$ be the degree matrix of the graph $G$.
Unnormalized spectral clustering arises from the observation that the matrix $\bs V$ composed of the trailing $k$ eigenvectors of the graph Laplacian $\bs R = \boldsymbol{L}=\boldsymbol{D}-\boldsymbol{A}$ minimizes a relaxation of the NP-Hard $\operatorname{RatioCut}$ objective for partitioning a graph into communities of roughly balanced size \cite{simon_partitioning_1991}, such that the $k$-dimensional embedding of $G$ offered by $\bs V^\top$ unveils clustering structure in Euclidean space discoverable by heuristics like cardinality-constrained vector partitioning \cite{alpert_spectral_1995} or $k$-means \cite{von_tutorial_2007}. 

\begin{proposition}[]
\label{prop:prop-low-rank-unnormalized-sc}
  The matrix $\boldsymbol M=n\boldsymbol I - \boldsymbol L$ is an MCM for unnormalized spectral clustering for any $n \geq \max |\lambda(\boldsymbol L)|$. In particular, choosing any $n \geq 2\max_{i}\operatorname{deg}(i)$ — for example, $n=2d$ when $G$ is unweighted — yields an MCM. 
\end{proposition}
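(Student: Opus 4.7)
The plan is to exploit two standard facts: (i) the eigenvectors of $\boldsymbol{L}$ and those of $n\boldsymbol{I}-\boldsymbol{L}$ coincide, while the eigenvalue orderings are reversed; and (ii) for a symmetric positive semidefinite matrix, the top-$k$ SVD subspace is exactly the span of the top-$k$ eigenvectors. Since USC clusters using the trailing $k$ eigenvectors of $\boldsymbol{L}$, once we arrange for $\boldsymbol{M}=n\boldsymbol{I}-\boldsymbol{L}$ to be PSD, its leading $k$ singular vectors will span the desired embedding subspace, verifying the MCM condition of Algorithm~\ref{alg:static-scd-template}.

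Concretely, I would first recall that $\boldsymbol{L}$ is symmetric PSD with spectral decomposition $\boldsymbol{L}=\sum_{i=1}^{d}\lambda_i \boldsymbol{v}_i\boldsymbol{v}_i^\top$, ordered so that $\lambda_1\ge\cdots\ge\lambda_d\ge 0$. Then $\boldsymbol{M}=n\boldsymbol{I}-\boldsymbol{L}=\sum_{i=1}^{d}(n-\lambda_i)\boldsymbol{v}_i\boldsymbol{v}_i^\top$, so the eigenvectors of $\boldsymbol{M}$ are precisely those of $\boldsymbol{L}$, with eigenvalues $n-\lambda_i$ in reversed order. Choosing $n\ge \max|\lambda(\boldsymbol{L})|=\lambda_1$ ensures $n-\lambda_i\ge 0$ for all $i$, making $\boldsymbol{M}$ symmetric PSD. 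Consequently the $n-\lambda_i$'s are also the singular values of $\boldsymbol{M}$, and the top-$k$ left singular subspace $\langle \boldsymbol{U}_k(\boldsymbol{M})\rangle$ coincides with $\langle \boldsymbol{v}_{d-k+1},\ldots,\boldsymbol{v}_d\rangle$, the span of the trailing $k$ eigenvectors of $\boldsymbol{L}$. This is exactly the USC embedding subspace, certifying $\boldsymbol{M}$ as an MCM.

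For the "in particular" clause, I would invoke Gershgorin's circle theorem on $\boldsymbol{L}=\boldsymbol{D}-\boldsymbol{A}$: the $i$th Gershgorin disk is centered at $L_{ii}=\operatorname{deg}(i)$ with radius $\sum_{j\ne i}|L_{ij}|=\operatorname{deg}(i)$ (using nonnegativity of edge weights), so every eigenvalue of $\boldsymbol{L}$ lies in $[0,2\operatorname{deg}(i)]$ for some $i$, yielding $\max|\lambda(\boldsymbol{L})|\le 2\max_i \operatorname{deg}(i)$. Thus any $n\ge 2\max_i\operatorname{deg}(i)$ meets the hypothesis of the first part. When $G$ is unweighted, every vertex satisfies $\operatorname{deg}(i)\le d-1$, so $n=2d\ge 2(d-1)\ge 2\max_i\operatorname{deg}(i)$ suffices.

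There is really no serious obstacle here; the only subtlety worth flagging is the distinction between eigenvectors and singular vectors, which is what forces the PSD hypothesis on $\boldsymbol{M}$. If $n$ were chosen smaller than $\lambda_1$, some shifted eigenvalues $n-\lambda_i$ would be negative and the top-$k$ \emph{singular} subspace of $\boldsymbol{M}$ could pick up eigenvectors associated with the \emph{largest} eigenvalues of $\boldsymbol{L}$ (via the absolute value), breaking the correspondence with USC's trailing eigenvectors. The Gershgorin bound is what makes the sufficient condition explicitly checkable from $G$ without computing $\lambda(\boldsymbol{L})$.
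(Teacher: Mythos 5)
Your proof is correct and follows essentially the same route as the paper's: shift by $n\geq\max|\lambda(\boldsymbol L)|$ to make $\boldsymbol M$ positive semidefinite so that its top singular vectors coincide with its leading eigenvectors (equivalently, the trailing eigenvectors of $\boldsymbol L$), then use Gershgorin's circle theorem to obtain the explicit bound $\lambda(\boldsymbol L)\subset[0,2\max_i\operatorname{deg}(i)]$ for the second clause. Your remark on why the PSD condition is what keeps the singular subspace from picking up the wrong end of the spectrum is a nice, accurate articulation of the point the paper's proof makes implicitly.
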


 A notable special case of Proposition~\ref{prop:prop-low-rank-unnormalized-sc} occurs when $G$ is a graph on $k$ disjoint cliques containing $s$ nodes each. In this case, the community structure of $G$ is entirely unambiguous and $\boldsymbol M=s\boldsymbol I - \boldsymbol L$ turns out to be precisely rank-$k$ (Appendix~\ref{sec:simple-networks}).

\paragraph{Normalized Spectral Clustering (NSC)}Normalized spectral clustering arises from a modification of the $\operatorname{RatioCut}$ objective wherein balance is encouraged among in-community edge densities rather than community sizes. The new objective's relaxation is minimized with the $k$ smallest eigenvectors of the \textit{normalized symmetric graph Laplacian} $\bs L^{\text{sym}}:=\bs D^{-1/2}\bs L\bs D^{-1/2}$. 

\begin{proposition}
\label{prop:signless-lapl}
The normalized signless Laplacian $\boldsymbol Q^{\text{sym}} = \boldsymbol D^{-1/2} \boldsymbol Q \boldsymbol D ^{-1 /2}$, $\boldsymbol Q=\boldsymbol D + \boldsymbol A$, is an MCM for normalized spectral clustering.
\end{proposition}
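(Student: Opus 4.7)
The plan is to show that the top-$k$ singular subspace of $\boldsymbol{Q}^{\text{sym}}$ coincides with the span of the $k$ trailing eigenvectors of $\boldsymbol{L}^{\text{sym}}$ used by NSC, thereby verifying the MCM condition $\langle \boldsymbol{U}_k(\boldsymbol{Q}^{\text{sym}})\rangle = \langle \boldsymbol{C}\rangle$ from the table of notation and Algorithm~\ref{alg:static-scd-template}.

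The first step is the direct algebraic identity: conjugating by $\boldsymbol{D}^{-1/2}$ on both sides gives
\begin{equation*}
\boldsymbol{Q}^{\text{sym}} = \boldsymbol{D}^{-1/2}(\boldsymbol{D}+\boldsymbol{A})\boldsymbol{D}^{-1/2} = \boldsymbol{I} + \boldsymbol{D}^{-1/2}\boldsymbol{A}\boldsymbol{D}^{-1/2} = 2\boldsymbol{I} - \boldsymbol{L}^{\text{sym}}.
\end{equation*}
In particular, $\boldsymbol{Q}^{\text{sym}}$ and $\boldsymbol{L}^{\text{sym}}$ are simultaneously diagonalizable: each eigenvector $\boldsymbol{v}$ of $\boldsymbol{L}^{\text{sym}}$ with eigenvalue $\lambda$ is also an eigenvector of $\boldsymbol{Q}^{\text{sym}}$ with eigenvalue $2-\lambda$. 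Hence the smallest-eigenvalue eigenvectors of $\boldsymbol{L}^{\text{sym}}$ are precisely the largest-eigenvalue eigenvectors of $\boldsymbol{Q}^{\text{sym}}$.

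Next I would translate this eigenvalue statement into the required singular-value statement. It is well known that $\lambda(\boldsymbol{L}^{\text{sym}})\subset[0,2]$, so the relation above yields $\lambda(\boldsymbol{Q}^{\text{sym}})\subset[0,2]$; in particular $\boldsymbol{Q}^{\text{sym}}$ is positive semidefinite. For a symmetric PSD matrix, the singular values equal the eigenvalues and the left/right singular vectors may be taken to be the eigenvectors, with ordering by magnitude coinciding with ordering by eigenvalue. Therefore the rank-$k$ truncated SVD of $\boldsymbol{Q}^{\text{sym}}$ has column space equal to the span of its top-$k$ eigenvectors, which by the previous step is exactly the span of the $k$ trailing eigenvectors of $\boldsymbol{L}^{\text{sym}}$ — namely the NSC spectral embeddings $\boldsymbol{C}$.

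The only step requiring any care is the ordering argument, since in general the singular values of a symmetric matrix are the absolute values of its eigenvalues and could reshuffle the spectrum; positive semidefiniteness of $\boldsymbol{Q}^{\text{sym}}$ (inherited from $\lambda(\boldsymbol{L}^{\text{sym}})\subset[0,2]$) is what rules this out. I do not anticipate any real obstacle beyond stating this observation cleanly, so the proof should be short: one line of algebra to identify $\boldsymbol{Q}^{\text{sym}}=2\boldsymbol{I}-\boldsymbol{L}^{\text{sym}}$, a one-line appeal to the spectral bound on $\boldsymbol{L}^{\text{sym}}$ to secure PSD-ness, and an invocation of Algorithm~\ref{alg:static-scd-template}'s MCM definition to conclude.
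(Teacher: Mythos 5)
Your proposal is correct and follows essentially the same route as the paper's proof: identify $\boldsymbol{Q}^{\text{sym}} = 2\boldsymbol{I} - \boldsymbol{L}^{\text{sym}}$, use the spectral bound $\lambda(\boldsymbol{L}^{\text{sym}}) \subset [0,2]$ to conclude positive semidefiniteness, and then use PSD-ness to equate top singular vectors with leading eigenvectors, hence with the trailing eigenvectors of $\boldsymbol{L}^{\text{sym}}$. The only cosmetic difference is that the paper derives the $[0,2]$ bound from the positive semidefiniteness of both $\boldsymbol{L}$ and the signless Laplacian $\boldsymbol{Q}$ rather than citing it as known, which is the same underlying fact.
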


The tail of $\bs Q^{\text{sym}}$'s spectrum is intimately related to \textit{anti}community structure \cite{kirkland_bipartite_2011}. Proposition~\ref{prop:signless-lapl} complementarily relates the head of its spectrum to community structure. 

\paragraph{Spectral Modularity Maximization (SMM)} In many real-world networks, what matters is not that inter-community edge density is low, but that it is \textit{lower than expected}. The notion of network \textit{modularity} proposed in \cite{newman_finding_2004} quantifies this. Numerous spectral approaches to approximately maximize modularity exist by clustering embeddings derived from the leading positive eigenvectors of the \textit{modularity matrix} $\boldsymbol{B}=\boldsymbol{A}-\mathbb{E}\bs A$ \cite{newman_finding_2006, white_spectral_2005, zhang_multiway_2015}.\footnote{Here, the expectation is taken with respect to the \textit{Newman-Girvan Null Model} \cite{newman_finding_2006}: $[\mathbb{E}\bs A]_{j\ell} = {\operatorname{deg}(j)\operatorname{deg}(\ell)}/{\operatorname{vol}(A)}$. This is standard in contemporary treatments of network science such as \cite{newman_networks_2018}.} 

\vspace{-2mm}



\section{Experiments}


\begin{figure}
    \centering
    \begin{subfigure}[b]{0.49\textwidth}
        \centering
        \includegraphics[width=\textwidth]{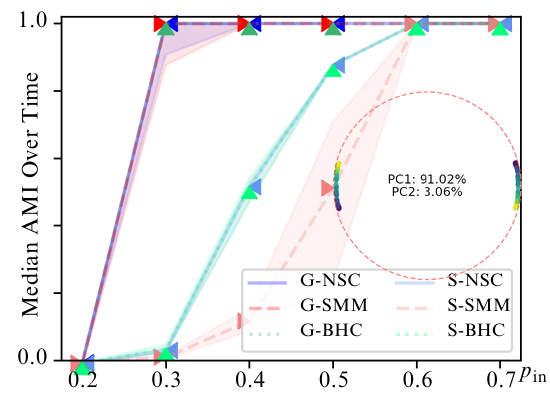}
        \caption{$p_{\text{switch}}=T/1000$}
        \label{subfig:pswitch-1000}
    \end{subfigure}
    \hfill
    \begin{subfigure}[b]{0.49\textwidth}
        \centering
        \includegraphics[width=\textwidth]{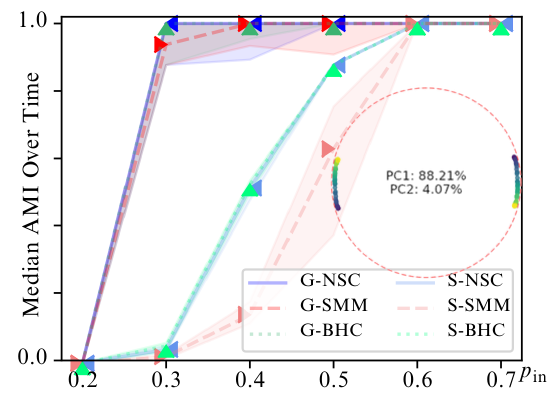}
        \caption{$p_{\text{switch}}=T/750$}
        \label{subfig:pswitch-750}
    \end{subfigure}
    
    \vspace{1em}
    
    \begin{subfigure}[t]{0.49\textwidth}
        \centering
        \includegraphics[width=\textwidth]{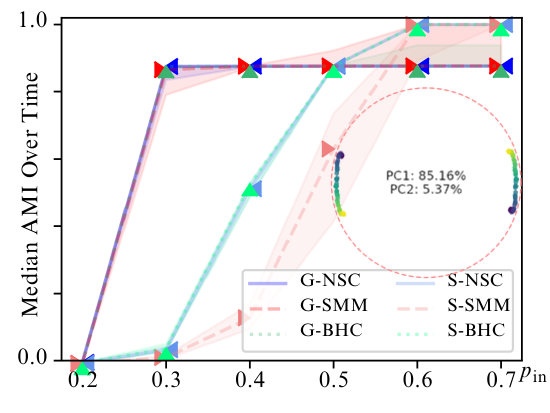}
        \caption{$p_{\text{switch}}=T/500$}
        \label{subfig:pswitch-500}
    \end{subfigure}
    \hfill
    \begin{minipage}[b]{0.49\textwidth}
        \centering
    \captionof{figure}{Comparison of median $\text{\new{AMI}} \in [0,1]$ over time versus $p_{\text{in}}$, medianed over $50$ simulations of the dynamic SBM ($d=T=50$, $k=2$, $p_{\text{out}}=0.2$) as $p_{\text{switch}}$ ranges from low to very high. Each unit circle displays the trajectory of modularity matrix first-eigenvectors as $T$ progresses when $p_{\text{in}}=0.4$. For low (\ref{subfig:pswitch-1000}) and medium (\ref{subfig:pswitch-750}) values of $p_{\text{switch}}$, said trajectory `walks along' the unit circle, suggesting (Proposition~\ref{prop:geodesic-assumption}) that the dynamics satisfy the geodesic assumption. When $p_{\text{switch}}$ is very high (\ref{subfig:pswitch-500}), the trajectory `falls off': the assumption has been violated. G- resp. S- refers to geodesic resp. static algorithm versions. }
        \label{fig:vary-pswitch}
    \end{minipage}
\end{figure}

\label{sec:experiments}
We demonstrate the effectiveness of our method on real and synthetic evolving network data. On the synthetic data, we relate the behavior of our method to the presence of latent geodesic structure as graph dynamics range from smooth to jagged. We also show that the geodesic generalizations of $15$ spectral methods (Table~\ref{tab:network-modalities}) via Algorithm~\ref{alg:geodesic-dcd} empirically outperform their static counterpart, where the static method is applied separately at each time point. On real data, we show that the geodesic method achieves favorable performance over a collection of popular temporal community detection benchmarks, including when the algorithm no longer assumes a fixed number of temporally stable latent communities. Experiments were all performed in \texttt{Python} on a 2019 MacBook Pro.\footnote{\newnew{Code is available at \href{https://github.com/jacobh140/spectral-dcd}{https://github.com/jacobh140/spectral-dcd}.}} 

\paragraph{Model selection and checking the geodesic assumption} Throughout this section, model selection (i.e., choosing $k_e$ and/or $k_c$) is performed by letting $k$ range over successive algorithm runs and choosing that which yields partitions of highest modularity \cite{white_spectral_2005}. We quantify `partitions of highest modularity' by taking the mode over $k$, but other summarizations such as multilayer modularity \cite{mucha_community_2010} could be used as well. 
%
 %
When $k_c=2$, an additional step is often available for assessing the presence of geodesic structure.
 For clarity of exposition, we focus on spectral modularity maximization, $\bs M=\overline{\bs B}=\boldsymbol{I} + \boldsymbol{B}/\|\boldsymbol{B}\|_{\mathrm{F}}$. 
 When the goal is two-way clustering, it is customary to classify node $\ell$ based on $\operatorname{sign}(u_\ell)$, where $\boldsymbol{u}$ is the first eigenvector of $\overline{\boldsymbol{B}}$ \cite{newman_finding_2006}. This corresponds to considering the space $\text{Gr}(1,d)$ in Algorithm~\ref{alg:geodesic-dcd} (i.e., $k_e=1$, $k_c=2$). Of course, the lines comprising $\text{Gr}(1,d)$ are naturally identified with pairs of antipodal points on the $(d-1)$-sphere $\mathbb{S}^{d-1} \subset \mathbb{R}^d$, and geodesics on $\mathbb{S}^{d-1}$ coincide with great circle arcs. The main idea, then, is that if we concatenate the respective first eigenvectors $\boldsymbol{u}_1, \dots, \boldsymbol{u}_T$ of $\overline{\boldsymbol{B}}_1, \dots, \overline{\boldsymbol{B}}_T$ and their negations $-\boldsymbol{u}_1, \dots, -\boldsymbol{u}_T$ into a data matrix $\boldsymbol{X} \in \mathbb{R}^{d \times 2T}$ and project these columns into their 2-dimensional PCA subspace, we should see the projections of the $\bs u_i$ into $\mathbb{R}^2$ `walk along' the unit circle in two mirrored trajectories if geodesic structure is present. The following proposition makes this precise. Results can be seen in Figure~\ref{fig:vary-pswitch}. \begin{proposition}
    $\sigma_{1}(\bs X) \geq \sigma_{2}(\bs X) > \sigma_{3}(\bs X)=\dots=\sigma_{\min(d, 2T)}(\bs X)=0$ if and only if the first singular subspaces of the $\overline{\boldsymbol B}_i$ lie in the image of a curve in $\text{Gr}(d,1)$ that is a Riemannian geodesic.
    \label{prop:geodesic-assumption}
\end{proposition}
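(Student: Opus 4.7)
The plan is to translate the spectral hypothesis on $\bs X$ into a rank condition on a smaller matrix, and then to invoke the explicit identification of geodesics on $\mathrm{Gr}(d,1)$ with great circles on the unit sphere $\mathbb{S}^{d-1}$.

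First I would set $\bs U := [\bs u_1 \mid \cdots \mid \bs u_T]$ so that $\bs X = [\bs U \mid -\bs U]$, which has the same column space as $\bs U$ and satisfies $\bs X \bs X^\top = 2\,\bs U \bs U^\top$. Consequently $\sigma_i(\bs X) = \sqrt{2}\,\sigma_i(\bs U)$, so the hypothesized singular value profile $\sigma_1(\bs X) \geq \sigma_2(\bs X) > \sigma_3(\bs X) = \cdots = 0$ is equivalent to the assertion that $\bs u_1,\dots,\bs u_T$ span a two-dimensional subspace $V \subset \mathbb{R}^d$.

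Next I would recall the standard identification of $\mathrm{Gr}(d,1)$ with the Riemannian quotient $\mathbb{S}^{d-1}/\{\pm 1\}$; under this identification the non-constant geodesics are exactly images of great circles on $\mathbb{S}^{d-1}$, and each such great circle is the intersection of $\mathbb{S}^{d-1}$ with some 2-dimensional linear subspace of $\mathbb{R}^d$. (This can also be read off the geodesic formula preceding equation~(1), specializing to $k=1$: the curve $t \mapsto \cos(\theta t)\bs h + \sin(\theta t)\bs y$ traces a great circle in $\mathrm{span}(\bs h, \bs y)$.) The forward direction then follows: if the $\bs u_i$ span a 2-plane $V$, then because $\|\bs u_i\|=1$ they all lie on the great circle $\mathbb{S}^{d-1}\cap V$, whose image in $\mathrm{Gr}(d,1)$ is a Grassmann geodesic containing every $\langle \bs u_i\rangle$. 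Conversely, if each $\langle \bs u_i\rangle$ lies on the image of a Grassmann geodesic, lifting to $\mathbb{S}^{d-1}$ places each $\bs u_i$ on a common great circle, which lies in a 2D subspace $V'$; hence every column of $\bs X$ lies in $V'$ and $\mathrm{rank}(\bs X)\leq 2$, with the strict inequality $\sigma_2(\bs X) > 0$ recording that the $\bs u_i$ genuinely span $V'$.

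The main obstacle is the degenerate case in which all $\bs u_i$ are parallel: then $\mathrm{rank}(\bs X)=1$, the lines $\langle \bs u_i\rangle$ coincide at a single point and sit on infinitely many great circles, and $\sigma_2(\bs X)=0$. The strict inequality in the statement rules this out, so the characterization really pins down the non-degenerate situation in which a \emph{unique} non-constant Riemannian geodesic passes through all the points. I would handle this explicitly by interpreting ``the image of a curve that is a Riemannian geodesic'' as the image of a non-constant geodesic, which is precisely the regime relevant to the visual diagnostic exploited in Figure~\ref{fig:vary-pswitch}.
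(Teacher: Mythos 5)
Your argument is correct and takes essentially the same route as the paper's proof: both reduce the singular-value condition to the statement that the unit vectors $\bs u_1,\dots,\bs u_T$ span a two-dimensional subspace, and both then use the identification of $\text{Gr}(d,1)$ with the Riemannian quotient $\mathbb{S}^{d-1}/\{\pm 1\}$ (under which geodesics correspond to great circles, i.e.\ intersections of $\mathbb{S}^{d-1}$ with $2$-planes) to pass between the two conditions — the paper simply spells out the covering-map and local-isometry verifications that you cite as standard. Your explicit treatment of the degenerate rank-one case is a worthwhile addition, since the paper's own converse direction only establishes $\sigma_3(\bs X)=\dots=0$ and is silent on why $\sigma_2(\bs X)>0$; note, though, that merely restricting to non-constant geodesics does not fully repair the literal ``if and only if,'' because coincident lines still lie in the image of a non-constant geodesic, so one must additionally assume the $\langle\bs u_i\rangle$ are not all equal.
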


\begin{figure}
    \centering
\includegraphics{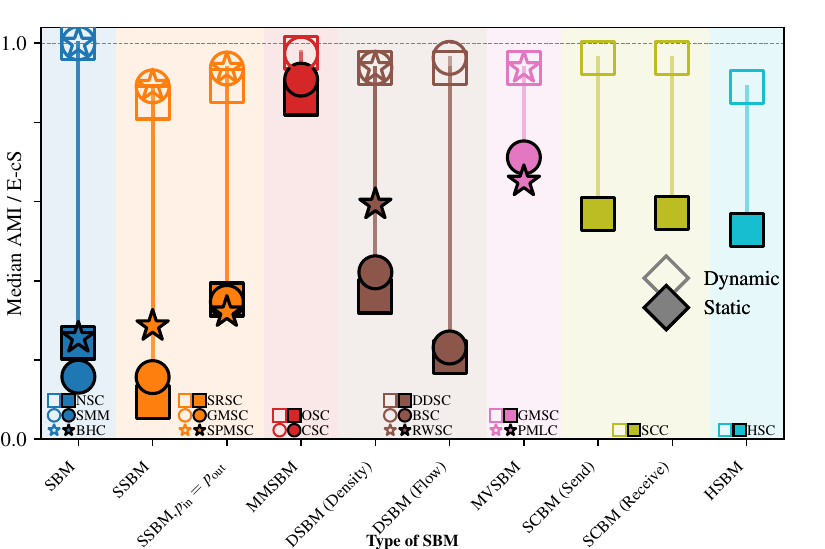}
    \caption{Comparison of median (over $50$ simulations and $20$ time steps of the appropriate SBM) \new{AMI/E-cS} for various static spectral methods and their dynamic generalization. Each color corresponds to a network modality, each line an SBM setting for that modality, and each symbol a spectral method for detecting communities in that modality (hollow for its dynamic extension, filled for static). By default where applicable, $d=120$, $k=2$, $p_{\text{in}}=0.3$, $p_{\text{out}}=0.2$, and $p_{\text{switch}}=10^{-2}$.
    Exceptions are in the second SSBM and second DSBM parameter settings, where $p_{\text{in}}=p_{\text{out}}$ to enforce clustering based solely on edge affinity/orientation. 
    Appendix~\ref{sec:instantation-detail} elaborates upon each individual column.}
    \label{fig:results-summary}
\end{figure}

\paragraph{Synthetic} We evaluate with a dynamic model based on a setting of the dynamic stochastic block model (dynamic SBM) studied in \cite{matias_statistical_2017}. Initially, $d$ nodes are divided equally among $k$ planted communities. At a given snapshot index $i \in [T]$, an edge is placed between each pair of intra-community nodes resp. inter-community nodes with probability $p_{\text{in}}$ resp. $p_{\text{out}}$. Dynamics are introduced by stipulating that at each snapshot, any node (that has yet to switch) switches community with probability $p_{\text{switch}}$. The performance of algorithms for simple networks, \new{quantified using adjusted mutual information (AMI) \cite{vinh2009information},} is compared in Figure~\ref{fig:vary-pswitch}, which also assesses geodesic structure (using Proposition~\ref{prop:geodesic-assumption}) as SBM dynamics range from smooth to very jagged.
Algorithms for non-simple network modalities are evaluated using a dynamic extension of the appropriate SBM. That includes the signed SBM (SSBM) for signed networks, mixed-membership SBM (MMSBM), directed SBM (DSBM), multiview SBM (MVSBM), stochastic coblock model (SCBM), and hierarchical SBM (HSBM). The definition of each modified SBM differs minutely from that of the standard SBM outlined here and is provided in appendix~\ref{sec:instantation-detail}.  Figure \ref{fig:results-summary} shows these results: in all cases, the communities recovered by the Grassmannian-smoothed dynamic network embeddings outperform those from the static embeddings, sometimes dramatically. \new{Overlapping and hierarchical community detection algorithms are evaluated using element-centric similarity (E-cS) \cite{gates2019element}.}


\paragraph{Real} We assess performance on the face-to-face interaction network of \cite{stehle_high_2011}. Over the course of two elementary school days, $77,602$ contact events were recorded among $242$ individuals ($232$ children in $10$ classes, and the $10$ teachers of those classes). The data was segmented into $10$-minute intervals, yielding a sequence of 
$T=102$ contact networks encoded by unweighted symmetric $232 \times 232$ adjacency matrices. As preprocessing, bridge edges were placed between graph components to maintain connectedness and teachers were removed from the network. The class memberships lend natural community structure to this evolving network, which are perfectly recovered at each time step by geodesic spectral clustering and near-perfectly recovered by the other geodesic methods (Figure~\ref{fig:real}, LEFT). 
We benchmark against the suite of popular dynamic community detection algorithms implemented within the \texttt{tnetwork} library \cite{cazabet_documentation_nodate}, \new{specifically the Label Smoothing (LS)  \cite{falkowski_mining_2006}, Smoothed Louvain (SL) \cite{aynaud_static_2010}, and Graph Smoothing (SG) \cite{guo_evolutionary_2014}} algorithms.  
The three geodesic approaches each outperform the benchmarks across all time, apparently dramatically so around the $3$-$\operatorname{4h}$ point — presumably a lunch break — on each day. We refrain from making a claim as to whether the mingling of classes around lunchtime is best considered as a noisy sample or as the dissolution of the latent community structure, and we indicate our ambivalence by masking these intervals in gray. We hypothesized that the geodesic method's favorable performance could be attributed to it seeking precisely $k_c=10$ communities at each time step while the benchmarks seek a variable number of communities. This turned out to be largely false: it is indeed the case that when stable latent communities are anticipated in the network dynamics, a dynamic community detection algorithm  capable of utilizing this information such as Algorithm~\ref{alg:geodesic-dcd} is preferred. However, (Figure~\ref{fig:real}, RIGHT) shows that a simple extension of Algorithm~\ref{alg:geodesic-dcd} to the case of varying $k_c$, based on sweeping $k_c$ to locally maximize modularity (Appendix~\ref{sec:variable-k-appendix}), performs nearly identically to Algorithm~\ref{alg:geodesic-dcd}. 
\vspace{-4mm}




\begin{figure}
    \centering
    \begin{subfigure}[b]{0.45\textwidth}
        \centering
        \includegraphics{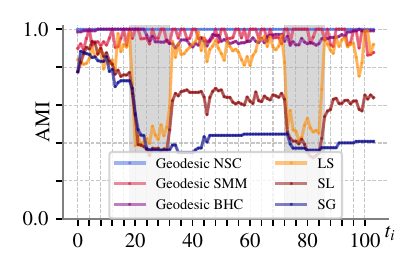}
        \label{subfig:eval}
    \end{subfigure}
    \hfill
    \begin{subfigure}[b]{0.45\textwidth}
        \centering

        \includegraphics{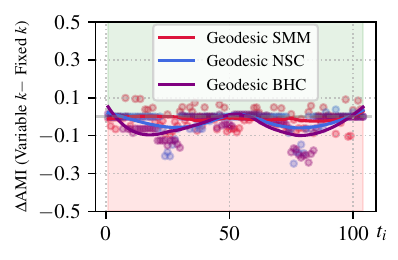}
        \label{subfig:variable-vs-fixed-k}
    \end{subfigure}
    \caption{Evaluation on the two-day elementary school face-to-face interaction network of \cite{stehle_high_2011}. LEFT: Included as benchmarks are the Label Smoothing (LS) approach of \cite{falkowski_mining_2006}, the Smoothed Louvain (SL) algorithm of \cite{aynaud_static_2010}, and the Graph Smoothing (SG) approach of \cite{guo_evolutionary_2014}. The three geodesic approaches uniformly outperform the benchmarks. RIGHT: The \new{AMI} difference at each time step between Algorithm~\ref{alg:geodesic-dcd} (fixed $k_c$) and its extension (Appendix~\ref{sec:variable-k-appendix}) to the variable-$k_c$ case. The data points are scattered, with each trajectory smoothened using a third-order Savitzky-Golay filter \cite{savitzky1964smoothing}  to visually enhance any trends. The results imply that the two algorithms perform very similarly. 
    }
    \label{fig:real}
\end{figure}

\section{Conclusion} 
\label{sec:conclusion}
This work presented and analyzed a Grassmann geometry-based framework for generalizing the popular family of spectral algorithms for community detection in static networks to the time-varying setting. 
Our method is broadly applicable to spectral community detection methods and has excellent performance, as demonstrated on both synthetic and real data experiments.
We note that our method only focuses on a single approach to fitting a geodesic to data. Extensions and improvements of said approach would induce analogous extensions and improvements here. 

\section{Acknowledgments}
The authors would like to thank Mark Newman, Samuel Sottile, Jeffrey Fessler, Soo Min Kwon, Haroon Raja, Cameron Blocker, and Alexander Saad-Falcon for their insightful discussions. J. Hume was supported in part by an REU associated with NSF CAREER award CCF-1845076. L. Balzano was supported in part by NSF CAREER award CCF-1845076 and DoE award DE-SC0022186.

\bibliographystyle{unsrtnat}
\bibliography{gs_bib_test}

\clearpage
\appendix

\renewcommand{\contentsname}{\newnew{Table of Contents}}
\newnew{\tableofcontents}

\clearpage

\section{Instantiations and Evaluations of the Proposed Framework}

\label{sec:instantation-detail}

This section provides proofs, interpretations, and experiments expounding the discussion in Section~\ref{sec:instantiations}. We begin with the straightforward proof of Proposition~\ref{prop:mcm-prop}. 

\begin{proof}[Proof of Proposition \ref{prop:mcm-prop}]
    First suppose $\mathscr{A}$ clusters based on the leading eigenvectors of $\boldsymbol R$. Since $|\lambda_d(\boldsymbol  R)| \leq \|\boldsymbol  R\|_{\text{F}}$, scaling by $1/\|\boldsymbol  R\|_{\text{F}}$ normalizes the spectrum of $\boldsymbol  R$ to lie in $[-1,1]$. Then the addition of $\boldsymbol  I$ turns the matrix positive semidefinite, whence its singular vectors and leading eigenvectors agree. 
    Now suppose $\mathscr{A}$ clusters based on the trailing eigenvectors of $\boldsymbol R$. Analogous to the above, division by $\|\boldsymbol R\|_{\mathrm{F}}$ normalizes the spectrum of $\boldsymbol R$ to lie in $[-1,1]$. Then subtracting it from $\boldsymbol I$ turns the matrix positive semidefinite, such that the top singular vectors of the matrix $\boldsymbol I - \boldsymbol R/\|\boldsymbol R\|_{\mathrm{F}}$ agree with its leading eigenvectors, which in turn agree with the trailing eigenvectors of $\boldsymbol R$. 
\end{proof}

We now proceed to discuss specific network modalities. The figures shown in this section are the numerical results used to create Figure~\ref{fig:results-summary} in the main text.

\subsection{Simple Networks}
\label{sec:simple-networks}
\begin{proof}[Proof of Proposition \ref{prop:prop-low-rank-unnormalized-sc}]

Since $\boldsymbol{L}$ is symmetric, its spectrum is real — say, $\lambda(\boldsymbol  L) \in [\lambda_{\min}, \lambda_{\max}] \subset \mathbb{R}$. $\max |\lambda(\boldsymbol L)|$ either equals $|\lambda_{\min}|$ or $|\lambda_{\max}|$; the spectrum of the matrix $\boldsymbol L / \max |\lambda(\boldsymbol L)|$ therefore lives in $[-1,1]$, and so we have  $\lambda(n \boldsymbol I - \boldsymbol L) \subset [0,2]$ for any $n \geq \max |\lambda(\boldsymbol L)|$. Since this matrix is positive semidefinite, its top singular vectors and leading eigenvectors agree. Of course, its leading eigenvectors are precisely the trailing eigenvectors of $\boldsymbol L$. 
    
For the second portion of the statement, let $n \geq 2 \max_{i \in V}\operatorname{deg}(i)$. Given a node $i \in V$, notice that $$\sum_{j \neq i} |L_{ij}|=\sum_{j \neq i} |-A_{ij}|=\operatorname{deg}(i).$$
By the Gershgorin circle theorem (\cite{fessler_linear_2024}, Section 8.5.2), the eigenvalues $\lambda_{1} \geq \lambda_{2} \geq \dots \geq \lambda_{d} \geq 0$ of $\boldsymbol L$ are located within the union of closed discs \[
    \lambda_i \in \bigcup_{i=1}^d \{\lambda \in \mathbb{R} : |\lambda - \operatorname{deg}(i)| \leq \operatorname{deg}(i)\} 
\]
This implies $\lambda(\boldsymbol L) \subset [0, 2 \max_{i \in V}\operatorname{deg}(i)]$ and thus $\lambda(\frac{\boldsymbol L}{n}) \subset [0,1]$. By the same reasoning that concludes the paragraph above, the result follows.
\end{proof}

\paragraph{Ideal communities and subspace structure} We now elaborate on the special case of Proposition~\ref{prop:prop-low-rank-unnormalized-sc} alluded to in Section~\ref{sec:instantiations}, wherein $G$ consists of $k$ cliques of $s$ nodes each. 

Permuting the node indices so that the graph Laplacian $\boldsymbol{L}$ of $G$ is block diagonal, with each block corresponding to a clique and identical to each other block, computing the spectrum of $\bs L$ amounts to computing the spectrum corresponding to a single complete graph $K_{s}$ of size $s$ and copying that spectrum $k$ times. The graph Laplacian $\boldsymbol L(K_{s})$ of $K_{s}$ is $$\boldsymbol  L_{K_{s}}=(s-1)\boldsymbol  I - (\boldsymbol  1 \boldsymbol  1^{\top}-\boldsymbol  I)=s\boldsymbol I - \boldsymbol 1 \boldsymbol 1^{\top},$$
where $\boldsymbol 1$ denotes the vector of all ones. As with any graph Laplacian, the vector $\boldsymbol 1$ is an eigenvector with eigenvalue $0$. Any other eigenvector $\boldsymbol v$ (of which there are $s-1$) must be orthogonal to this one, implying that $$\boldsymbol  1 \boldsymbol  1^{\top} \boldsymbol  v=\boldsymbol  1\left( \sum_{j}v_{j} \right)=\boldsymbol  1 \cdot 0=0.$$
It follows that $$\boldsymbol L_{K_{s}}\boldsymbol v=  s \boldsymbol  I\boldsymbol  v - \boldsymbol  1 \boldsymbol  1^{\top}\boldsymbol  v=s \boldsymbol  v,$$
i.e., $\boldsymbol v$ has eigenvalue $s$. Putting together each spectrum, we obtain $$\lambda(\boldsymbol  L)=(\overbrace{0,\dots,0}^{k \text{ times}}, \overbrace{s, \dots, s}^{d-k \text{ times}}).$$
Then the matrix $\boldsymbol M = s \boldsymbol I - \boldsymbol L$ has spectrum $$\lambda(\boldsymbol  M)=s - \lambda(\boldsymbol  L)=(\overbrace{s, \dots, s}^{k \text{ times}}, \overbrace{0,\dots,0}^{d-k \text{ times}}).$$
The result is that the MCM $\boldsymbol M$ has rank-$k$, meaning that in this special case the seemingly high-dimensional node embeddings it contains as columns already live in a low-dimensional latent subspace. This lends extra intuition to the formulation of spectral clustering as a subspace estimation problem (Algorithm~\ref{alg:static-scd-template}): in the case of a graph with entirely unambiguous community structure, the embeddings from $\boldsymbol M$ live in a $k$-dimensional subspace. We can then view small perturbations of this ideal graph structure — the random removal of edges between intracluster nodes and placement of edges between interclique nodes —  as adding spatial noise to the original node embeddings living in the subspace, which it is then the goal of spectral clustering to remove (Algorithm~\ref{alg:static-scd-template}) via low-rank approximation.

\begin{proof}[Proof of Proposition \ref{prop:signless-lapl}]
    For a discussion on signless Laplacians, see \cite{cvetkovic_signless_2007}. The only property needed here is that, like $\boldsymbol L= \boldsymbol D-\boldsymbol A$, $\boldsymbol Q=\boldsymbol D + \boldsymbol A$ is positive semidefinite.  It follows that the matrices
\begin{align}
\boldsymbol  {L}^{\text{sym}} = \boldsymbol  {D}^{-1/2} (\boldsymbol  {D} - \boldsymbol  {A}) \boldsymbol  {D}^{-1/2} \text{ and } \boldsymbol  Q^{\text{sym}} = \boldsymbol  {D}^{-1/2} (\boldsymbol  {D} + \boldsymbol  {A}) \boldsymbol  {D}^{-1/2} = 2\boldsymbol  {I} - \boldsymbol  {L}^{\text{sym}}
\end{align}
are each positive semidefinite. So, \begin{equation}
\lambda(\boldsymbol  {L}^{\text{sym}}) \subset \mathbb{R}_{\geq 0} \text{ and } 2 - \lambda(\boldsymbol  {L}^{\text{sym}}) \subset \mathbb{R}_{\geq 0},
\end{equation} which constrains $\lambda(\boldsymbol {L}^{\text{sym}})$ to the interval $[0,2]$. It follows that $\lambda(  \frac{1}{2} \boldsymbol L_{\text{sym}} ) \subset [0,1]$, and in turn that $\lambda(\boldsymbol I- \frac{1}{2} \boldsymbol L^{\text{sym}}) \subset [0,1]$. In particular, $\boldsymbol I - \frac{1}{2}\boldsymbol L^{\text{sym}}$ is positive semidefinite, hence its top singular vectors $\boldsymbol u_{1},\dots, \boldsymbol u_{k}$ agree with its leading eigenvectors $\boldsymbol v_{1},\dots, \boldsymbol v_{k}$. These are precisely the $k$ trailing eigenvectors of $\frac{1}{2}\boldsymbol L^{\text{sym}}$, and precisely the top singular vectors of $\frac{1}{2} \boldsymbol Q^{\text{sym}}$; the result follows.
\end{proof}

\paragraph{Bethe Hessian clustering} 
There is one additional static spectral method for clustering simple networks whose discussion is largely omitted from the main text. Bethe Hessian clustering \cite{saade_spectral_2014} generalizes spectral clustering by replacing the graph Laplacian with a regularized generalization $\boldsymbol{H}(r)=(r^2 - 1)\boldsymbol{I}-r \boldsymbol{A} + \boldsymbol{D}$. We set $r=\sqrt{c}$, where $c$ is the mean degree of $G$, per the suggestion in \cite{saade_spectral_2014}. Unlike the graph Laplacian, the Bethe Hessian is not in general positive semidefinite. Like the modularity matrix $\bs B$, the most positive and most negative eigenvectors in fact each contain important information about (anti)community structure. Here, we consider Bethe Hessian clustering (BHC) via the trailing eigenvectors of $\bs H(r)$, with MCM obtained (as always when no alternative is specified) via Proposition~\ref{prop:mcm-prop}.

\begin{SCfigure}
    \includegraphics{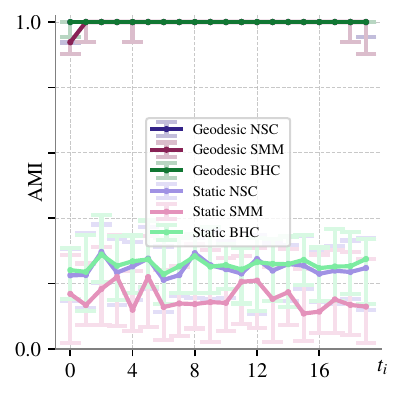}
    \caption{Representative comparison of geodesic and static methods for dynamic community detection in simple networks, medianed over $50$ simulations of the dynamic stochastic block model from Section~\ref{sec:experiments} ($d=120$, $T=20$, $p_{\text{switch}}=10^{-2}$, $p_{\text{in}}=0.3$, $p_{\text{out}}=0.2$). Error bars correspond to $25$th and $75$th percentiles. With the exception of spectral modularity maximization at time $i=1$, the geodesic methods each median to recover the true community structure of each snapshot. The methods compared are normalized spectral clustering (NSC) \cite{shi_normalized_2000}, spectral modularity maximization (SMM) \cite{white_spectral_2005}, and Bethe Hessian clustering (BHC) \cite{saade_spectral_2014}. } 
    \label{fig:sbm}
\end{SCfigure}

\subsection{Signed Networks}
\label{sec:signed-networks}
 Let $G=(V, E^+, E^-)$ be a graph comprised of both positive (attractive) and negative (repulsive) edges, such as voting networks \cite{west_exploiting_2014} or the Pearson correlation networks ubiquitous in time series analysis \cite{aghabozorgi_time_2015}. The study of community detection in {signed networks} has its roots in structural balance theory \cite{heider_attitudes_1946}; the goal is to obtain a partition under which increased positive edges exist within communities and increased negative edges exist between communities \cite{cucuringu_sponge_2019}.
 \paragraph{Signed networks — algorithms and modeled clustering matrices} Two generalizations of spectral clustering to signed networks are based on trailing eigenvectors of the \textit{signed ratio Laplacian} \begin{equation}
    |\bs D| - \bs A,
\end{equation}
where $| \bs D |$ is the entrywise absolute value of $\bs D$ \cite{kunegis_spectral_2010}, and the \textit{geometric mean Laplacian} \cite{mercado_clustering_2016}\footnote{Here, the $+$ and $-$ exponents refer to quantities defined over the unsigned networks $G^+=(V, E^+)$ and $G^- = (V, E^-)$ respectively. }  \begin{equation}
    \boldsymbol L^{+^{1/2}}(\boldsymbol L^{+^{-1/2}} \boldsymbol Q^{-} \boldsymbol L^{+^{-1/2}})^{1/2}\boldsymbol L^{+^{1/2}}
\end{equation}
(or their normalized counterparts). More generally, choose $p \in \mathbb{R}$. Set $k':=k-1$ if $p \geq 1$ and $k'=k$ if $p < 1$. The approach in \cite{mercado_spectral_2019} subsumes both of the aforementioned approaches as the special cases $p=1$ and $p \to 0$ of applying $k$-means to the $k'$ smallest eigenvectors of the \textit{signed power mean $p$-Laplacian} \begin{equation}
    \boldsymbol{L}(p) := \bs M_{p}(\boldsymbol L^{\text{sym}+}, \boldsymbol Q^{\text{sym}-}),
\end{equation} 
where $\boldsymbol{M}_p(\bs A, \bs B)$ denotes the matrix power mean $\boldsymbol M_{p}(\boldsymbol A, \boldsymbol B):=(\frac{\boldsymbol A^{p}+\boldsymbol B^{p}}{2})^{1/p}$.\footnote{When $p<0$ the matrix power mean requires positive definite matrices; the authors address this by considering $\bs L^{\text{sym}+} + \varepsilon \boldsymbol{I}$ and $\bs Q^{\text{sym}-} + \varepsilon \boldsymbol{I}$ for some $\varepsilon > 0$ as necessary. }

\begin{proposition}[]
\label{prop:signed-low-rank}
    Let $\bs U_{k'} \bs \Sigma_{k'} \bs V_{k'}^\top$ be a truncated singular value decomposition of $\overline{\bs L}(p) := \boldsymbol I-  \frac{1}{2} \bs M_{p}(\boldsymbol L^{\text{sym}+}, \boldsymbol Q^{\text{sym}-})$. Take the leading singular vectors $\bs u_1, \dots, \bs u_{k'}$ as node embeddings. Then the matrix serves as an MCM for the three algorithms described here.
\end{proposition}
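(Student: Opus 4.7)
The plan is to mirror the ``shift-into-PSD'' strategy used in the proofs of Propositions~\ref{prop:prop-low-rank-unnormalized-sc} and \ref{prop:signless-lapl}: bound the spectrum of $\bs M_p(\boldsymbol L^{\text{sym}+}, \boldsymbol Q^{\text{sym}-})$ inside $[0,2]$, so that $\overline{\bs L}(p)=\boldsymbol I - \tfrac{1}{2}\bs M_p(\boldsymbol L^{\text{sym}+}, \boldsymbol Q^{\text{sym}-})$ becomes positive semidefinite and its top singular vectors coincide with its leading eigenvectors, which are in turn the trailing eigenvectors of $\bs M_p(\boldsymbol L^{\text{sym}+}, \boldsymbol Q^{\text{sym}-})$. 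Since the signed power mean $p$-Laplacian subsumes the (normalized) signed ratio Laplacian at $p=1$ and the (normalized) geometric mean Laplacian at $p\to 0$ per the discussion preceding the statement, establishing the MCM property for $\bs L(p)$ suffices for all three algorithms.

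First I would record the two spectral bounds $\lambda(\boldsymbol L^{\text{sym}+})\subset[0,2]$ and $\lambda(\boldsymbol Q^{\text{sym}-})\subset[0,2]$; the former is standard (or can be shown exactly as in the proof of Proposition~\ref{prop:prop-low-rank-unnormalized-sc}, using Gershgorin on $\boldsymbol L^{\text{sym}+}$), and the latter is already derived within the proof of Proposition~\ref{prop:signless-lapl}. Equivalently, $0\preceq \boldsymbol L^{\text{sym}+}\preceq 2\boldsymbol I$ and $0\preceq \boldsymbol Q^{\text{sym}-}\preceq 2\boldsymbol I$. Next I would argue that the matrix power mean inherits this interval: for $p>0$, applying the spectral calculus to the bounds gives $\boldsymbol L^{\text{sym}+\,p}\preceq 2^p\boldsymbol I$ and $\boldsymbol Q^{\text{sym}-\,p}\preceq 2^p\boldsymbol I$, hence $\tfrac{1}{2}(\boldsymbol L^{\text{sym}+\,p}+\boldsymbol Q^{\text{sym}-\,p})\preceq 2^p\boldsymbol I$, and taking the $1/p$-th power (on the level of eigenvalues, since this matrix is PSD) yields $\bs M_p \preceq 2\boldsymbol I$. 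The PSD lower bound $\bs M_p\succeq 0$ is immediate from PSD-ness of each summand. For $p<0$ one uses the $\varepsilon\boldsymbol I$-regularized versions invoked in the paper's definition of $\bs M_p$, and the $p\to 0$ (geometric mean) case follows by the same bounds applied to $\bs A\#\bs B$, or as a limit of the $p>0$ case.

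Once the spectrum of $\bs M_p(\boldsymbol L^{\text{sym}+},\boldsymbol Q^{\text{sym}-})$ is contained in $[0,2]$, the spectrum of $\overline{\bs L}(p)=\boldsymbol I-\tfrac{1}{2}\bs M_p$ lies in $[0,1]$, so $\overline{\bs L}(p)$ is PSD. Therefore its top $k'$ singular vectors $\bs u_1,\dots,\bs u_{k'}$ coincide with its top $k'$ eigenvectors, which (by the spectral mapping $\lambda\mapsto 1-\lambda/2$) are the $k'$ \emph{trailing} eigenvectors of $\bs M_p(\boldsymbol L^{\text{sym}+},\boldsymbol Q^{\text{sym}-})$. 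Specializing to $p=1$ recovers the trailing eigenvectors used by (normalized) signed ratio spectral clustering, while $p\to 0$ recovers those used by (normalized) geometric mean spectral clustering, so the same $\overline{\bs L}(p)$ serves as an MCM for all three methods under the convention of Algorithm~\ref{alg:static-scd-template}.

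The main obstacle is the operator-theoretic step: confirming that $\bs A,\bs B\in[0,2]$ (in the semidefinite order sense $0\preceq\cdot\preceq 2\boldsymbol I$) implies $\bs M_p(\bs A,\bs B)\in[0,2]$ uniformly in $p$. The subtlety is that $x\mapsto x^p$ is only operator monotone for $p\in[0,1]$, so one must avoid appealing to operator monotonicity directly; the argument above sidesteps this by comparing each $\bs A^p$ only against the scalar matrix $2^p\boldsymbol I$, for which the ordering $\bs A^p\preceq 2^p\boldsymbol I$ follows purely from the eigenvalue bound on $\bs A$. A careful statement of this scalar-comparison reduction, together with verification that the limiting/regularized power means used when $p\leq 0$ inherit the same bound, is the only non-routine part of the argument.
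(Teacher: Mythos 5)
Your proposal is correct and follows essentially the same strategy as the paper's proof: establish $\lambda\bigl(\bs M_{p}(\boldsymbol L^{\text{sym}+}, \boldsymbol Q^{\text{sym}-})\bigr) \subset [0,2]$, conclude that $\boldsymbol I - \tfrac{1}{2}\bs M_p$ is positive semidefinite so that its top singular vectors coincide with its leading eigenvectors, and identify these with the trailing eigenvectors used by all three signed methods. The one place you diverge is in how the upper eigenvalue bound is obtained: the paper first bounds $\lambda_1^{1/p}(\boldsymbol L^{\text{sym}+^p}+\boldsymbol Q^{\text{sym}-^p})$ by splitting via Weyl's inequality and then undoing the $p$-th powers term by term, whereas you compare the averaged sum $\tfrac{1}{2}(\boldsymbol L^{\text{sym}+^p}+\boldsymbol Q^{\text{sym}-^p})$ directly against the scalar matrix $2^p\boldsymbol I$ and take the $1/p$-th power once, using only monotonicity of $x\mapsto x^{1/p}$ on the eigenvalues of a single PSD matrix. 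Your version is in fact the more robust of the two: it avoids the interchange of the concave/convex map $x\mapsto x^{1/p}$ with a sum of eigenvalue bounds (which is delicate for $0<p<1$) and never needs operator monotonicity, exactly as you flag in your final paragraph. Both treatments handle $p<0$ by the same $\varepsilon\boldsymbol I$ regularization remark, and your explicit note on the $p\to 0$ geometric-mean limit is a small addition the paper leaves implicit.
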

\begin{proof}[Proof of Proposition \ref{prop:signed-low-rank}]
    $M_{p}(\boldsymbol L^{\text{sym}+}, \boldsymbol Q^{\text{sym}-})$ is positive semidefinite as a combination of sums, powers, and positive scalings thereof. Hence $\lambda(M_{p}(\boldsymbol L^{\text{sym}+}, \boldsymbol Q^{\text{sym}-})) \subset \mathbb{R}_{\geq 0}$. Also, we can bound its leading eigenvalue as:
\begin{align}
\lambda_{1}(M_{p}(\boldsymbol L^{\text{sym}+}, \boldsymbol Q^{\text{sym}-})) = & \lambda_{1}\left(\left(\frac{\boldsymbol  L^{\text{sym}+^{p}} + \boldsymbol  Q^{\text{sym}-^{p}}}{2}\right)^{1 / p}\right) \\
= & \frac{1}{2} \lambda_{1}^{1 / p}\left(  \boldsymbol  L^{\text{sym}+^{p}} + \boldsymbol  Q^{\text{sym}-^{p}}  \right)  \\
\leq &\frac{1}{2} \lambda_{1}^{1/p} (\boldsymbol  L^{\text{sym}+^{p}}) + \frac{1}{2}\lambda_{1}^{1/p}(\boldsymbol  Q^{\text{sym}-^{p}})  \label{eq:prop5weyl} \\
=  & \frac{1}{2} \lambda_{1}^{p / p}(\boldsymbol  L^{\text{sym}+}) + \frac{1}{2} \lambda_{1}^{p/p}(\boldsymbol  Q^{\text{sym}-})\\
\leq& \frac{1}{2} \left( 2 \right) + \frac{1}{2}(2) \label{eq:prop5eigbound} \\
= & 2,
\end{align}
where the inequality in \eqref{eq:prop5weyl} follows from Weyl's inequality and the inequality in \eqref{eq:prop5eigbound} follows from the fact that both $\lambda(\boldsymbol L^{\text{sym}+}) \subset [0,2]$ and $\lambda(\boldsymbol Q^{\text{sym}-}) \subset [0,2]$ (cf. Proposition~\ref{prop:signless-lapl}). Hence $\lambda(M_{p}(\boldsymbol L^{\text{sym}+}, \boldsymbol Q^{\text{sym}-})) \subset \mathbb{R}_{ \leq 2}$. Thus $\lambda(M_{p}(\boldsymbol L^{\text{sym}+}, \boldsymbol Q^{\text{sym}-})) \subset [0,2]$. Or, if $p<0$ forces us to spectrum-shift by $\varepsilon>0$, $[\varepsilon, 2+\varepsilon]$. In turn, $\lambda\left(  \frac{1}{2} M_{p}(\boldsymbol L^{\text{sym}+}, \boldsymbol Q^{\text{sym}-}) \right) \subset [0,1]$; it follows that the matrix $\boldsymbol I-  \frac{1}{2} M_{p}(\boldsymbol L^{\text{sym}+}, \boldsymbol Q^{\text{sym}-})$ is positive semidefinite and hence its top singular vectors and top eigenvectors agree. But its top eigenvectors are precisely the bottom eigenvectors of $\frac{1}{2} M_{p}(\boldsymbol L^{\text{sym}+}, \boldsymbol Q^{\text{sym}-})$; the result follows. 
\end{proof} 

\paragraph{Signed networks — experiments}
We evaluate on a dynamic model, given in Algorithm~\ref{alg:dynamic-ssbm}, based on the signed stochastic block model found in \cite{cucuringu_sponge_2019, mercado_spectral_2019} and the standard dynamic stochastic block model from Section~\ref{sec:experiments}. The results are shown in Figure \ref{fig:signed-results}.

\begin{algorithm}
\caption{Dynamic Signed Stochastic Block Model (Dynamic SSBM)}
\label{alg:dynamic-ssbm}
\begin{algorithmic}[1]
\Require Probabilities $p_{\text{in}}$, $p_{\text{out}}$, $\eta_{\text{in}}$, $\eta_{\text{out}}$, $p_{\text{switch}}$, number $k$ of planted communities
\State \textbf{At time $i=1$:}
\State Partition node set (up to remainder) into $k$ equally-sized planted communities. Let $Z(\ell)$ denote the community to which node $\ell$ belongs.
\State For every node pair $(i,j)$, place a positive edge ($+1$) between $i$ and $j$ with probability $p_{\text{in}}$ if $Z(i)=Z(j)$, and a negative edge $(-1)$ with probability $p_{\text{out}}$ if $Z(i) \neq Z(j)$. Note that, unlike the unsigned case, it is sensible to choose $p_{\text{in}}=p_{\text{out}}$. 
\State \textit{Flip} the sign of each placed edge according to respective probabilities $\eta_{\text{in}}$ and $\eta_{\text{out}}$ for $\eta_{\text{in}}, \eta_{\text{out}}< \frac{1}{2}$. Commonly $\eta_{\text{in}}=\eta_{\text{out}}$. 
\State \textbf{Repeat the above for snapshot indices }$2 \leq i \leq T$. Dynamics are induced by asserting that, at a given time step $i$, each node (that has not before switched) switches community with probability $p_{\text{switch}}$, its destination chosen uniformly at random among the $k-1$ options. 
\end{algorithmic}
\end{algorithm}

\begin{figure}[htbp]
    \centering
    \begin{subfigure}[b]{0.45\textwidth}
        \centering
        \includegraphics{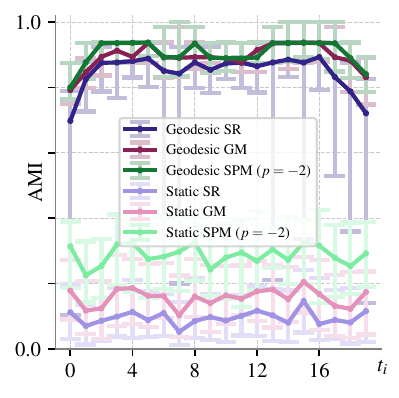}
        \caption{$p_{\text{in}} > p_{\text{out}}$}
        \label{subfig:ssbm-pin-greater-pout}
    \end{subfigure}
    \hfill
    \begin{subfigure}[b]{0.45\textwidth}
        \centering
        \includegraphics{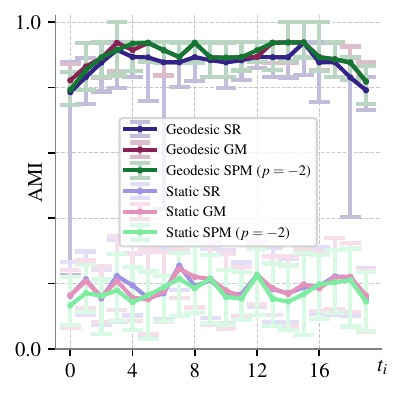}
        \caption{$p_{\text{in}} = p_{\text{out}}$}
        \label{subfig:ssbm-pin-equal-pout}
    \end{subfigure}
    \caption{Comparison of geodesic and static signed community detection methods, medianed over $50$ simulations of two settings of a dynamic signed stochastic block model ($d=120$, $T=20$, $k=2$, $p_{\text{switch}}=10^{-2}$, $\eta_{\text{in}}=\eta_{\text{out}}=0.4$). (\ref{subfig:ssbm-pin-greater-pout}): $p_{\text{in}}=0.3$, $p_\text{out}=0.2$. (\ref{subfig:ssbm-pin-equal-pout}): $p_{\text{in}}=p_{\text{out}}=0.3$: even when the intra- and inter-community connection probabilities are identical, the algorithms are still able to discover community structure based solely on positive and negative edge affinities. We compare algorithms based on the signed ratio Laplacian (SR) \cite{kunegis_spectral_2010}, the geometric mean Laplacian (GM) \cite{mercado_clustering_2016}, and the more general matrix power mean Laplacian (SPM) \cite{mercado_spectral_2019} with $p=-2$.}
    \label{fig:signed-results}
\end{figure}

\subsection{Mixed-Membership Networks}
\label{sec:mixed-membership}
Oftentimes a node belongs to more than one community, e.g., in many social \cite{goldberg_finding_2010, reid_partitioning_2013} and neuronal \cite{kim_detecting_2015} networks. The output of an overlapping community detection algorithm generally takes the form of a $d \times k$ \textit{membership matrix} whose $j\ell$th element equals the estimated probability that node $j$ belongs to community $\ell$. For the purposes of evaluation, this matrix is often thresholded into a binary matrix whose $j$th row indicates the communities to which the $j$th node belongs. 

\paragraph{Mixed-membership networks — algorithms and modeled clustering matrices} 

A simple extension of spectral clustering to the context of overlapping communities is to replace the $k$-means step with a fuzzy $c$-means step. See \cite{nayak_fuzzy_2015} for a discussion on fuzzy $c$-means and \cite{wahl_hierarchical_2015} for analysis of the `fuzzy spectral clustering' method it induces. Other mixed-membership spectral methods include \cite{zhang_detecting_2020}, which compute a thin spectral decomposition $\boldsymbol A=\boldsymbol V_{k} \boldsymbol \Lambda_{k} \boldsymbol V_{k}^{\top}$ (eigenvalues in descending order), and defines node embeddings as $\boldsymbol X=\boldsymbol V_{k} \boldsymbol \Lambda_{k}^{1/2}$. After normalization and regularization, the authors apply $k$-medians clustering (as analyzed in \cite{arora_approximation_1998}); projecting the rows of $\boldsymbol X$ onto the subspace spanned by cluster centers yields a final $d \times k$ matrix representing soft cluster memberships. 

\paragraph{Mixed-Membership Networks — Experiments}
We evaluate on a dynamic model based on the mixed-membership stochastic block model described in \cite{airoldi_mixed_2008}.
To understand the static mixed-membership stochastic block model, we will provide a new interpretation of the (single-membership) stochastic block model, and then generalize it to the mixed-membership case. Following this, we will extend to the dynamic case. 

One can view the connectivity under the version of the stochastic block model from Section~\ref{sec:experiments} in terms of the parameter matrix \begin{equation}
    \begin{bmatrix}
p_{\text{in} }  & p_{\text{out}}  & \dots  & p_{\text{out}} \\
p_{\text{out} }  & p_{\text{in}}  & \dots& p_{\text{out}} \\
\vdots  & \vdots & \ddots & \vdots  \\
p_{\text{out}}  & p_{\text{out}}  & \dots  & p_{\text{in}} 
\end{bmatrix}=: \boldsymbol  B \in \mathbb{R}^{k \times k}
\end{equation}
where each element of is a Bernoulli random variable with parameter $p_{\text{in}}$ or $p_{\text{out}}$. Then, with $Z(\ell)$ denoting the planted community of a node $\ell$, the probability $\mathbb{P}(A_{ij}=1)$ of an edge between two nodes $i,j \in [d]$ is $B_{Z(i),Z(j)}$, which we can write as  $$\mathbb{P}(A_{ij}=1) = B_{Z(i),Z(j)} = \phi_{i,Z(i)}\phi_{j,Z(j)}B_{_{Z(i),Z(j)}} = \sum_{g=1}^{k}\sum_{h=1}^{k}   \phi_{ig}     \phi_{jh} B_{gh},$$
where $\boldsymbol \phi_{\ell}=[0  \cdots   0  \overbrace{1}^{g\text{th entry}}  0  \cdots  0 ]^{\top} \in \mathbb{R}^{k \times k}$ indicates that node $\ell$ belongs to the $g$th planted community with probability 1. 
A mixed-membership prescription then follows by allowing $\boldsymbol \phi_{\ell}$ to not be an indicator vector, but rather a normalized vector of probabilities wherein $\phi_{\ell g}$ describes the probability that node $\ell$ belongs to the $g$th community. We introduce simple dynamics into the model by asserting that, at each snapshot index, a node switches with probability $p_{\text{switch}}$, with its destination chosen uniformly at random from among all planted communities and their intersections. See Algorithm~\ref{alg:dynamic-mmsbm}.

We evaluate using \new{element-centric similarity as outlined in \cite{gates2019element}}. Although the overlapping community detection algorithms described above output community membership probabilities for each node, the \new{element-centric similarity metric utilizes binary assignments}. To be compatible, we threshold such that a node is declared part of a community if it belongs to that community with probably exceeding $p_{\text{thresh}}:=0.2$. Results are in Figure \ref{fig:mmsbm-results}.


\begin{algorithm}[H] 
\caption{Dynamic Mixed-Membership Stochastic Block Model}
\label{alg:dynamic-mmsbm}
\begin{algorithmic}[1]
\Require Number of communities $k$, number of nodes $d$, number of time points $T$, parameter matrix $\boldsymbol B$, switching probability $p_{\text{switch}}$, initial mixed-membership vectors $\bs \Phi \in \mathbb{R}^{d \times k}$ with columns $\{\boldsymbol \phi_{\ell}\}_{\ell=1}^d \subset \mathbb{R}^k$
\State \textbf{At time $i=1$:}
\State Use provided $\boldsymbol \phi_{\ell} \in \mathbb{R}^k$ for each node $\ell \in [d]$
\State Sample an adjacency matrix $A$,  $\mathbb{P}(A_{ij}=1) = \sum_{g=1}^{k}\sum_{h=1}^{k} \phi_{ig} \phi_{jh} B_{gh}$
\State \textbf{Repeat the above for $2 \leq i \leq T$}. Dynamics are induced as follows: at a given time step $i$, each node switches mixed-membership vector with probability $p_{\text{switch}}$, its new vector chosen uniformly at random from among the unique columns of $\bs \Phi$.
\end{algorithmic}
\end{algorithm}

\begin{SCfigure}
    \includegraphics{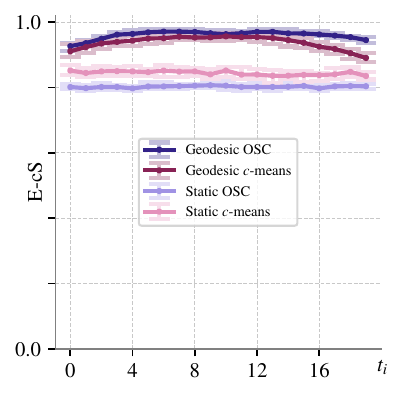}
    \caption{Comparison of geodesic and static overlapping community detection methods in time, medianed over $50$ simulations of a dynamic mixed-membership stochastic block model ($d=120$, $k=2$, $T=20$, $p_{\text{switch}}=10^{-2}$, $p_\text{in}=0.3$, $p_{\text{out}}=0.2$). 
    At time $i=1$, two communities are planted, each possessing $50$ `single-affiliated' nodes. The remaining $20$ nodes are equally affiliated with each community. That is, $\boldsymbol{\phi}_1=\begin{bmatrix}1  &  \cdots & 1  & 0.5  & \cdots  & 0.5  & 0  & \cdots  &  0 \end{bmatrix}^{\top}$ and  $\boldsymbol \phi_{2}=\begin{bmatrix} 0  & \cdots  &  0 & 0.5  & \cdots  & 0.5  & 1  &  \cdots & 1 \end{bmatrix}^{\top}$. Error bars correspond to $25$th and $75$th percentiles. The algorithms compared are the overlapping spectral clustering (OSC) method of \cite{zhang_detecting_2020} and  $c$-means spectral clustering based on \cite{wahl_hierarchical_2015}.} 
    \label{fig:mmsbm-results}
\end{SCfigure}

\subsection{Directed Networks}
\label{sec:directed-networks}
A great deal of network modalities lend themselves to asymmetric representations, including many social \cite{wang_detecting_2011}, informational \cite{huang_web_2006}, biological \cite{guimera_origin_2010, popa_directed_2011} and neuroscientific \cite{deco_dynamical_2011} networks. We consider two notions of `community' which naturally arise in this directed network context. The first aligns with the undirected case: a good partition divides the network into communities such that intracommunity edge densities far exceed intercommunity edge densities. The second is especially pertinent to the directed case, and concerns the grouping of nodes based on `patterns' rather than edge density. These include co-citation patterns (outgoing resp. incoming edges on in-community nodes are more likely to share common targets resp. sources) and flow-based patterns (a random walker is more likely to get `stuck' in a community) \cite{malliaros_clustering_2013}. 

\paragraph{Directed Networks — algorithms and modeled clustering matrices} One popular extension of spectral methods to the directed context is based on replacement of the directed adjacency matrix $\boldsymbol{A}$ with an appropriate symmetrization.  Two such symmetrizations are evaluated in \cite{satuluri_symmetrizations_2011}  in the context of spectral clustering. The first is \textit{bibliographic symmetrization}:\begin{equation}\boldsymbol{A}^{\text{bibliographic}} =\boldsymbol A \boldsymbol A^{\top} +  \boldsymbol A^{\top} \boldsymbol A,
    \label{bibliographic-sym}
\end{equation} where the \textit{bibliographic coupling matrix} $\boldsymbol{A}\boldsymbol{A}^\top$ counts the number of nodes in $G$ to which $i$ and $j$ both point \cite{kessler_bibliographic_1963} and the \textit{co-citation matrix} $\boldsymbol{A}^\top \boldsymbol{A}$ counts the number of nodes in $G$ that point to both $i$ and $j$  \cite{small_co_1973}. The second is  \textit{degree-discounted symmetrization}: \begin{align}
    \boldsymbol A^{\text{degree-discounted}}:=\boldsymbol  D_{\text{out}}^{-1/2}\boldsymbol  A \boldsymbol  D_{\text{in}}^{-1/2}\boldsymbol  A^{\top}\boldsymbol  D_{\text{out}}^{-1/2} + \boldsymbol  D_{\text{in}}^{-1/2}\boldsymbol  A^{\top}\boldsymbol  D_{\text{out}}^{-1/2}\boldsymbol  A \boldsymbol  D_{\text{in}}^{-1/2},
\end{align} which normalizes (\ref{bibliographic-sym}) to account for the heterogeneity of degree distributions found in real-world networks \cite{satuluri_symmetrizations_2011}. Also studied are Laplacian extensions for directed graphs, the most popular among these \cite{chung_laplacians_2005, zhou_learning_2005, gleich_hierarchical_2006, malliaros_clustering_2013} being the random walk-based directed Laplacian (RW) whose \textit{largest} eigenvectors encode graph partition structure \begin{equation}
    \boldsymbol \Theta=
    \frac{1}{2}\left( \boldsymbol \Pi^{1/2} \boldsymbol P \boldsymbol \Pi^{-1/2} + \boldsymbol \Pi^{-1/2} \boldsymbol P^{\top} \boldsymbol \Pi^{1/2} \right),
\end{equation}
 where $\boldsymbol P$ is the transition matrix $P_{ij}=\frac{A_{ij}}{\sum_{j=1}^{d}A_{ij}}$ and $\boldsymbol \Pi=\boldsymbol D_{\text{out}}^{-1}$.

We note that the aforementioned symmetrization approaches are intended to detect both density-based and pattern-based communities, whereas the directed Laplacian-based approaches are intended only to detect density-based communities (\cite{malliaros_clustering_2013}, Table 2). 

\paragraph{Directed networks — experiments}
We evaluate on following dynamic model (Algorithm~\ref{alg:dynamic-dsbm}), based on the directed stochastic block model proposed in \cite{cucuringu_hermitian_2020}.  Results are shown in Figure~\ref{fig:dsbm-results}.

\begin{algorithm}
\caption{Dynamic Directed Stochastic Block Model}
\label{alg:dynamic-dsbm}
\begin{algorithmic}[1]
\Require probabilities $p_{\text{in}}$, $p_{\text{out}}$, $p_{\text{switch}}$, matrix $\boldsymbol F \in [0,1]^{k \times k}$ satisfying $F_{\ell j}+F_{j\ell}=1$ for all $j,\ell \in [k]$, number $k$ of planted communities
\State \textbf{At time $i=1$:}
\State Partition node set (up to remainder) into $k$ equally-sized planted communities. Let $Z(\ell)$ denote the community to which the node $\ell$ belongs
\State For every node pair $(i,j)$, place an edge between $i$ and $j$ with probability $p_{\text{in}}$ if $Z(i)=Z(j)$ and probability $p_{\text{out}}$ if $Z(i) \neq Z(j)$. The edge is directed from $i$ to $j$ with probability $F_{Z(i), Z(j)}$; otherwise, it is directed from $j$ to $i$. 
\State \textbf{Repeat the above for snapshot indices $2 \leq i \leq T$}. Dynamics are induced by asserting that, at a given time step $i$, each node (that has not before switched) switches community with probability $p_{\text{switch}}$, its destination chosen uniformly at random among the $k-1$ options.
\end{algorithmic}
\end{algorithm}
\begin{figure}
    \centering
    \begin{subfigure}[b]{0.48\textwidth}
        \centering
        \includegraphics[width=\textwidth]{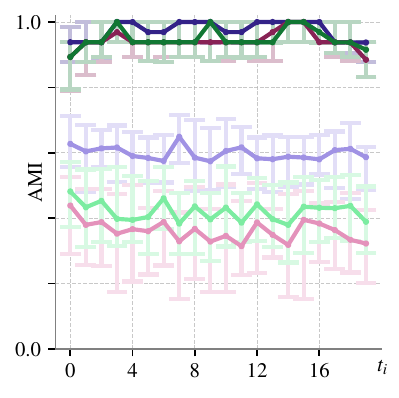}
        \caption{Density-based clustering}
        \label{subfig:density}
    \end{subfigure}
    \hfill
    \begin{subfigure}[b]{0.48\textwidth}
        \centering
        \includegraphics[width=\textwidth]{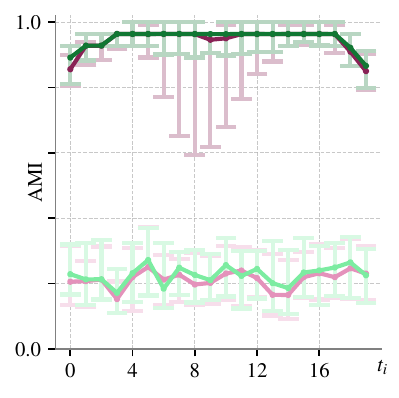}
        \caption{Pattern-based clustering}
        \label{subfig:pattern}
    \end{subfigure}
    
    \vspace{1em}
    
    \begin{subfigure}[b]{0.3\textwidth}
        \centering
        \includegraphics[width=\textwidth]{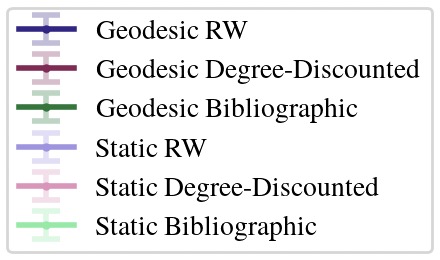}
    \end{subfigure}
    \hfill
    \begin{minipage}[b]{0.68\textwidth}
        \centering
        \caption{Comparison of geodesic and static methods medianed over $50$ simulations of two settings of a dynamic directed stochastic block model, both with $d=120$, $T=20$, $p_{\text{switch}}=10^{-2}$. (\ref{subfig:density}) $k=2$ communities are planted with $\boldsymbol F=\begin{bmatrix}0.5  & 0.4  ; 0.6 & 0.5\end{bmatrix}$, $p_{\text{in}}=0.3$, $p_\text{out}=0.2$ in a `density-based' parameter setting. (\ref{subfig:pattern}) $k=3$ communities are planted with $\boldsymbol{F}=\begin{bmatrix}1 / 2  & 2 /3  & 1 / 3  \\ 1 / 3  & 1 / 2 & 2 / 3 \\2 / 3 &  1/3  & 1/2\end{bmatrix}$, $p_{\text{in}}=p_{\text{out}}=0.2$ in a `flow/pattern-based' parameter setting \cite{cucuringu_hermitian_2020}. 
        }
        \label{fig:dsbm-results}
    \end{minipage}
    
\end{figure}

\subsection{Networks with Cocommunity or Bipartite Structure}
\label{sec:cocommunity-networks}
An adjacent problem to that of directed network community detection is that of co-community detection. The general coclustering problem concerns the grouping of data according to multiple attributes (e.g., samples and features) simultaneously. We  focus on two notions of cocommunity detection found in network analysis. The first notion applies to directed networks, where the two attributes correspond to rows and columns of the directed adjacency matrix. The output is two network partitions: one which clusters nodes with similar sending patterns, and another which clusters nodes with similar receiving patterns. The second notion applies to bipartite networks, where the goal is to cluster nodes of both types simultaneously. The output is two coindexed partitions of first-type nodes and second-type nodes respectively. The cocommunity detection method we generalize via Algorithm~\ref{alg:geodesic-dcd} applies to both notions.

\paragraph{Algorithms and modeled clustering matrices} We will consider the spectral coclustering method found in \cite{rohe_co_2016}. A related algorithm may be found in \cite{dhillon_co_2001}. Given a numbers $k_y$ and $k_z$ of sending clusters and receiving clusters to detect, the method computes a $k$-truncated singular value decomposition $\boldsymbol{U} \boldsymbol\Sigma \boldsymbol{V}^\top$ of a regularized  (asymmetric) graph Laplacian $\boldsymbol{L}$, where $k=\min(k_y, k_z)$. $\boldsymbol{U}$ is then used for `sending embeddings', while $\boldsymbol{V}$ is used for `receiving embeddings'. From the perspective of Algorithm~\ref{alg:static-scd-template}, $\boldsymbol{L}$ is already in MCM form when the goal is to detect receiving clusters. When the goal is to detect sending clusters, $\boldsymbol{L}^\top$ has the requisite form. 


\paragraph{Experiments} We evaluate with a dynamic model, given in Algorithm \ref{alg:coblock}, based on the dynamic stochastic coblock model from \cite{rohe_co_2016}. Results are given in Figure \ref{fig:scc-results}.

\begin{algorithm}
\caption{Dynamic Stochastic Coblock Model}
\label{alg:coblock}
\begin{algorithmic}[1]
\Require Number $k_y$ of sending communities, number $k_z$ of receiving communities; $\boldsymbol B \in \mathbb{R}^{k_y \times k_z}$ where $B_{\ell \ell'}=p_{\ell \ell'} \in [0,1]$ represents the probability of an edge existing from a node in sending cluster $\ell \in [k_y]$ to a node in receiving cluster $\ell' \in [k_z]$. Switching probabilities $p_{\text{switch, send}}$ and $p_{\text{switch, receive}}$.
\State If the graph is bipartite with node sets $V_1$ and $V_2$, declare $V_1$ the sending node set and $V_2$ the receiving node set. If the graph is not bipartite, both sending and receiving node sets are the full node set $V=V_{1}=V_{2}$.
\State Assign each node in the sending node set $V_{1}$ to one of the $k_y$ sending communities, divided equally. Similarly, assign each node in the receiving node set to one of the $k_z$ receiving communities, divided equally.
\State For each node pair $(i,j)$, place a (directed) edge from node $i$ to node $j$ with probability $B_{y_i,z_j}$, where $y_i$ is the sending community of $i$ and $z_j$ is the receiving community of $j$.
\State \textbf{Repeat for time steps $2 \leq i \leq T$:}
\State \quad At each time step $i$, for each node in $V_1$, with probability $p_{\text{switch, send}}$, swap its sending community assignment with a randomly chosen node from a uniformly random different sending community.
\State \quad Similarly, for each node in $V_2$, with probability $p_{\text{switch, receive}}$, swap its receiving community assignment with a randomly chosen node from a uniformly random different receiving community.
\State \quad After community assignments are updated, regenerate all edges according to step 3.
\end{algorithmic}
\end{algorithm}

\begin{figure}[htbp]
    \centering
    \begin{subfigure}[b]{0.45\textwidth}
        \centering
        \includegraphics{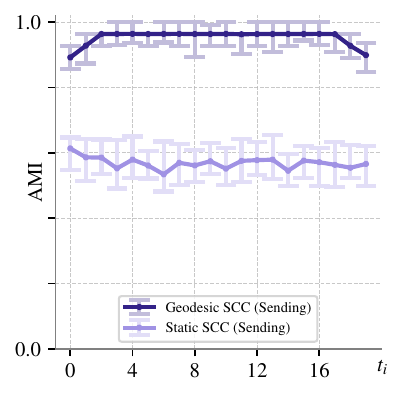}
        \label{subfig:scbm-sending}
    \end{subfigure}
    \hfill
    \begin{subfigure}[b]{0.45\textwidth}
        \centering
        \includegraphics{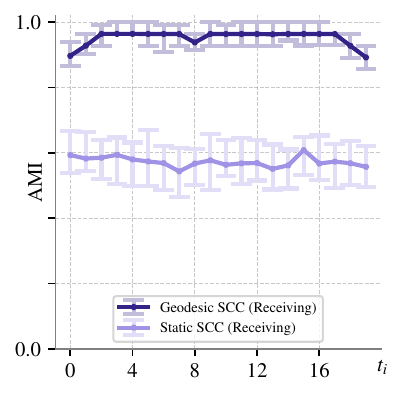}
        \label{subfig:scbm-receiving}
    \end{subfigure}
    \caption{Comparison of geodesic and static versions of the spectral coclustering (SCC) method in \cite{rohe_co_2016} over $50$ simulations of the dynamic stochastic coblock model ($d=120$, $T=20$, $p_{\text{switch, send}}=p_{\text{receive, send}}=10^{-2}$, $k_y=k_z=3$, $\boldsymbol  B=\begin{bmatrix}0.5  & 0.3  & 0.3 \\0.3  & 0.5  & 0.3\\0.3  & 0.3 & 0.5\end{bmatrix}$). Ribbons correspond to $25$th and $75$th percentiles.}
    \label{fig:scc-results}
\end{figure}

\subsection{Hierarchical Networks}
\label{sec:hierarchical-networks}
A great deal of real-world complex systems exhibit interesting behavior at multiple resolutions. Examples of detecting hierarchical communities within such systems abound in neuroscience \cite{ashourvan_multi_2019}, biochemistry \cite{ravasz_hierarchical_2002}, and the social sciences \cite{rezvani_survey_2022}. The output of a hierarchical community detection algorithm is a rooted tree whose vertices represent communities, edges represent parent-child relationships, and levels represent scales of resolution. 

\paragraph{Hierarchical networks — Algorithms and modeled clustering matrices}
We explore the approach to hierachical community detection described in \cite{laenen_nearly_2023}, which uses Laplacian eigenvectors to obtain the coarsest partition, then employs a degree-bucketing approach to unfold the community hierarchy. Another spectral method for hierarchical community detection (not included in our experiments) may be found in \cite{schaub_hierarchical_2023}.

\paragraph{Hierarchical networks — experiments}
Evaluation is performed on a dynamic model, given in Algorithm \ref{alg:dynamic-hsbm}, based on the hierarchical stochastic block model proposed in \cite{cohen_hierarchical_2019}. We score according to the hierarchial normalized mutual information metric proposed in \cite{perotti_hierarchical_2015}. 
Results are found in Figure \ref{fig:three-subfigs}.

\begin{algorithm}[H]
\caption{Dynamic Hierarchical Stochastic Block Model}
\label{alg:dynamic-hsbm}
\begin{algorithmic}[1]
\Require A rooted tree on $L$ leaves consisting of $M$ nodes with weights $p_{m} \in [0,1]$, $m \in [M]$, wherein each node corresponds to a planted community, and each level of the tree corresponds to a level of hierarchy in the network. Note that the weight assigned to the root node coincides with $p_{\text{out}}$ in the stochastic block model described in Section~\ref{sec:experiments}. Number $T$ of time steps. Probability $p_\text{switch}$. 
\State \textbf{At time $i=1$:}
\State Assign $d$ nodes equally (up to remainder) among the $L$ leaves
\State For every pair of nodes $i$, $j$ belonging to leaves $L(i)$, $L(j)$ respectively, place an edge with probability $p_{m}$, where $m$ denotes the least common ancestor of $L(i)$ and $L(j)$ 
\State \textbf{For time steps $2 \leq i \leq T$, repeat the above.} Dynamics are induced as follows:  at a given time step $i$, each node (that has not before switched) switches leaf community to one of its siblings with probability $p_{\text{switch}}$, its destination chosen uniformly at random.
\end{algorithmic}
\end{algorithm}

\begin{figure}
    \centering
    \begin{subfigure}[t]{0.3\textwidth}
        \centering
        \includegraphics{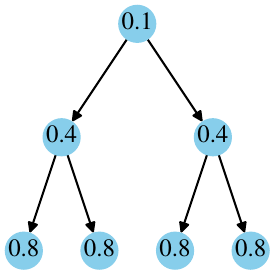}
        \label{fig:hierarchical-tree-example}
    \end{subfigure}
    \hfill
    \begin{subfigure}[t]{0.3\textwidth}
        \centering
        \includegraphics{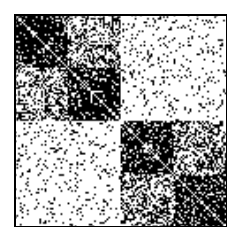}
        \label{fig:adj-matrix-example}
    \end{subfigure}
    \hfill
    \begin{subfigure}[t]{0.3\textwidth}
        \centering
        \includegraphics{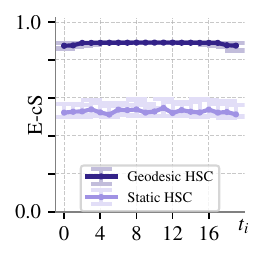}
        \label{fig:hsbm-results}
    \end{subfigure}
    \caption{LEFT: An example probability tree input into the dynamic hierarchical stochastic block model. MIDDLE: An adjacency matrix, sampled from the dynamic HSBM according to this tree. RIGHT: Comparison of geodesic and static versions of the hierarchical spectral clustering (HSC) method in \cite{laenen_nearly_2023}, medianed over $50$ simulations of the dynamic HSBM ($d=120$, $T=20$, $p_{\text{switch}}=10^{-2}$, probability tree mimicking that displayed but with values in $0.4 \pm 0.05$ to introduce noise.) Error bars correspond to $25$th and $75$th percentiles.}  
    \label{fig:three-subfigs}
\end{figure}

\subsection{Multiview Networks}
\label{sec:multiview-networks}
In practice, one often has access to multiple graphs corresponding to the same network. Experiments used to construct network representations of data often have many trials \cite{newman_network_2018}. Or there may exist many modes of relationship between the same set of nodes, each yielding its own `view' of the network — \cite{magnani_community_2021}. Like snapshot-represented dynamic networks, the underlying representation of a multiview network is generally a multiplex graph. Unlike dynamic networks, the goal is not to detect one community partition per layer, but rather to construct a single partition of the nodes using information from multiple layers. That said, the two problem settings can interact. For instance, suppose the observed dynamic simple graph $\{G_i\}_{i=1}^T$ of Section~\ref{sec:simple-networks} is of high temporal resolution, in the sense that $T$ is very large but the network evolution between adjacent snapshots is minuscule, perhaps with the exception of some outliers of high temporal discontinuity. By segmenting the snapshots into `windows', a dynamic multiview network can be created, and applying a dynamic multiview community detection to this network might capture a coarsened perspective on the unfolding community structure, perhaps while `smoothening away' the outliers. 

\paragraph{Multiview networks — algorithms and modeled clustering matrices} 
We discuss the dynamic generalization of two spectral methods for detecting communities in multiview networks, each based on analyzing the spectrum of a single `summary Laplacian' computed using the graph Laplacians from the individual layer. The first approach is directly adjacent to the signed power mean Laplacian method discussed in Section~\ref{sec:signed-networks}, where the spectrum of the power mean of $S$ Laplacians — one per view of the network — is considered \cite{mercado_power_2018}. The second approach is based on Grassmann manifold geometry \cite{dong_clustering_2013}: the authors look at Laplacian spectrums layer-by-layer to obtain a collection of subspaces $\{ \langle  \bs U_{i} \rangle\}_{i=1}^{S}$, then solve a Riemannian optimization problem on the Grassmann manifold for combining the $\langle \bs U_{i} \rangle$ into a single representative subspace $\langle \bs U \rangle$ that is `close' to each $\langle \bs U_{i} \rangle$. They obtain from this analysis a single matrix whose spectrum aims to summarize the clustering structure across all layers.

\paragraph{Multiview networks — experiments}
We evaluate with a dynamic model, given in Algorithm \ref{alg:dynamic-mvsbm}, based on a simple setting of the multiview stochastic block model analyzed in \cite{zhang_community_2024}. 
Results are found in Figure \ref{fig:mvsbm-results}. 

\begin{algorithm}[H] 
\caption{Dynamic Multiview Stochastic Block Model}
\label{alg:dynamic-mvsbm}
\begin{algorithmic}[1]
\Require Number $S$ of views, $T$ time points, $p_{\text{in}}$, $p_{\text{out}}$, $p_{\text{switch}}$
\State \textbf{At time $i=1$:}
\State Partition node set (up to remainder) into $k$ equally-sized planted communities. Let $Z(\ell)$ denote the community to which node $\ell$ belongs.
\State Sample $S$ adjacency matrices from the (static) stochastic block model described in Section~\ref{sec:experiments} according to $p_{\text{in}}$ and $p_{\text{out}}$ and concatenate into an $S \times d \times d$ adjacency tensor
\State \textbf{Repeat the above for snapshot indices $2 \leq i \leq T$}. Dynamics are induced as follows: at a given time step $i$, each node (that has not before switched) switches community with probability $p_{\text{switch}}$, its destination chosen uniformly at random among the $k-1$ options.
\end{algorithmic}
\end{algorithm}

\begin{SCfigure}
    \includegraphics{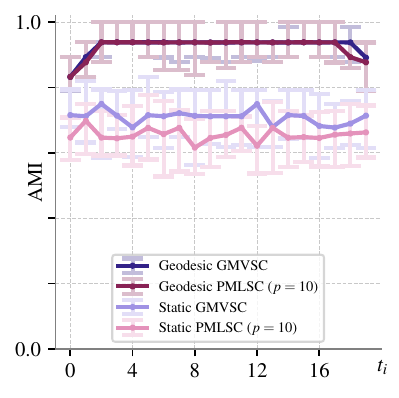}
    \caption{Comparison of geodesic and static multiview community detection methods in time, medianed over $50$ simulations of a dynamic mixed-membership stochastic block model ($d=120$, $k=2$, $T=20$, $p_{\text{switch}}=10^{-2}$, $p_\text{in}=0.3$, $p_{\text{out}}=0.2$, $S=3$). Error bars correspond to $25$th and $75$th percentiles. The algorithms compared are the Grassmannian multiview spectral clustering (GMVSC) algorithm of \cite{dong_clustering_2013} and the power mean Laplacian spectral clustering (PMLSC) method of \cite{mercado_spectral_2019} with $p=10$.} 
    \label{fig:mvsbm-results}
\end{SCfigure}

\subsection{Other Network Modalities}
\label{sec:other-networks}
Many successful methods for community detection in principle admit generalization via Algorithm~\ref{alg:geodesic-dcd}, even though they are not included among our experiments. Examples include spectral methods for hypergraph clustering \cite{zhou_learning_2006, concas_spectral_2020},  motif-based methods \cite{underwood_motif_2020} and higher-order, topological methods \cite{grande_disentangling_2024, krishnagopal_spectral_2021}. We also note that many of the non-regularized methods explored in this text have regularized counterparts which generalize via our method to the evolving setting as well \cite{qin_regularized_2013, cucuringu_regularized_2021, qing_regularized_2023, kumar_co_2011}.

\section{Proof of Proposition~\ref{prop:geodesic-assumption}}
\label{sec:appendix-geodesic-assumption}
    
    Suppose $\sigma_{1} \geq \sigma_{2} > \sigma_{3}=\dots=\sigma_{m}=0$, so that $\boldsymbol X=\boldsymbol a_{1} \sigma_{1}\boldsymbol b_{1}^{\top} + \boldsymbol a_{2} \sigma_{2}\boldsymbol b_{2}^{\top}$ and $\langle\boldsymbol X \rangle=\operatorname{span}(\boldsymbol a_{1},\boldsymbol a_{2})=\mathcal{U}$, where $\mathcal{U}$ is a 2-plane in $\mathbb{R}^{d}$ containing the origin. It follows that for each column $\boldsymbol c$ of $\boldsymbol X$ we have $$\boldsymbol  c \in \mathbb{S}^{d-1} \cap \mathcal{U},$$
so that $\boldsymbol c$ belongs to a great circle arc in $\mathbb{S}^{d-1}$. Hence the columns of $\boldsymbol X$ each live on the trace of a geodesic $\gamma:[0,1] \to  \mathbb{S}^{d-1}$. It suffices, then, to show that $\varphi\circ \gamma$ is a geodesic on the real projective space $\text{Gr}(1,d)$, where $\varphi:\mathbb{S}^{d-1} \to \text{Gr}(1,d)$ is given by $\varphi(\boldsymbol w):=\langle \boldsymbol w \rangle$. 

To this end, let the Lie group $\mathbb{Z} / 2\mathbb{Z} = \{1,-1\}$ act on $\mathbb{S}^{d-1}$ by (left) multiplication. The orbits of this action induce an equivalence relation (\cite{boumal_introduction_2023}, Definition 9.13) whose associated projection $\pi: \mathbb{S}^{d-1} \to \mathbb{S}^{d-1} / (\mathbb{Z} / 2\mathbb{Z})$ sends $\boldsymbol w \in \mathbb{S}^{d-1}$ to $[\boldsymbol w]=\{ \boldsymbol w, - \boldsymbol w \}$. This action is clearly smooth, free, and — since $\mathbb{Z}/2\mathbb{Z}$ is compact as a finite topological group — it is proper (\cite{lee_introduction_nodate}, Corollary 21.6). It is a local isometry as well; checking this amounts to verifying (\cite{boumal_introduction_2023}, Theorem 9.38) that the antipodal map $f:\mathbb{S}^{d-1} \to \mathbb{S}^{d-1}$, $\boldsymbol w \mapsto -\boldsymbol w$ satisfies, for all $\boldsymbol w \in \mathbb{S}^{d-1}$ and $\boldsymbol x_{1}, \boldsymbol x_{2} \in T_{\boldsymbol w}(\mathbb{S}^{d-1})$, $$\langle df(\boldsymbol w )[\boldsymbol  x_{1}], df(\boldsymbol  w)[\boldsymbol  x_{2}] \rangle_{f(\boldsymbol  w)} = \langle \boldsymbol  x_{1}, \boldsymbol  x_{2} \rangle_{\boldsymbol  w}.  $$ Because $df(\boldsymbol w)=-\text{id}_{\mathbb{S}^{d-1}}$, we can rewrite the desired assertion as $$\langle -\boldsymbol  x_{1}, -\boldsymbol  x_{2}  \rangle_{-\boldsymbol  w} = \langle \boldsymbol  x_{1}, \boldsymbol  x_{2} \rangle_{- \boldsymbol  w} =\langle \boldsymbol  x_{1}, \boldsymbol  x_{2} \rangle _{\boldsymbol  w};$$
since the Riemannian metric on $\mathbb{S}^{d-1}$ is inherited from the Euclidean inner product on $\mathbb{R}^{d}$, and the latter is invariant under orthogonal transformations, the equation holds. $\pi$ is a therefore a Riemannian covering map (\cite{gallier_differential_2020}, Theorem 23.18), and hence (\cite{gallier_differential_2020}, Proposition 18.6) it projects and lifts geodesics to geodesics.

Moreover, recall that the canonical geodesic distance between two points on the Grassmannian is $\| \Theta\|_{2}$, where $\Theta \in \mathbb{R}^k$ consists of principal angles between the subspaces \cite{bendokat_grassmann_2024}. It follows immediately that the (clearly well-defined) diffeomorphism $\psi:\mathbb{S}^{d-1} / (\mathbb{Z} / 2\mathbb{Z}) \to \text{Gr}(1,d)$ mapping $[\boldsymbol w]$ to $\langle \boldsymbol w\rangle$ preserves geodesic distances.  

Thus, $\varphi$ factors as $\psi \circ \pi$, where $\psi$ and $\pi$ each preserve geodesics. It follows that $$\{\varphi( \pm \boldsymbol  v_{1}^{(1)}),\dots, \varphi(\pm \boldsymbol  v_{1}^{(d)})\} =\{ \langle \pm\boldsymbol  v_{1}^{(1)}\rangle ,\dots, \langle\pm \boldsymbol  v_{1}^{(d)}\rangle\}$$
belongs to the trace of a Grassmann geodesic, as claimed. 

Conversely, suppose the first singular subspaces $\langle \boldsymbol v_{1}^{(1)} \rangle,\dots,\langle \boldsymbol v_{1}^{(T)} \rangle$ lie on the trace of a geodesic $\delta:[0,1] \to \text{Gr}(1,d)$. Since $\psi$ is a diffeomorphism that preserves geodesic distances, $\psi ^{-1}$ is as well. Additionally, $\pi$ lifts geodesics to geodesics as a Riemannian covering map. By lifting $\delta$ through $\psi$ and $\pi$ to geodesics on $\mathbb{S}^{d-1}$, we can realize each of the $\pm \boldsymbol v^{(i)}_{1}$ as living on the trace of a geodesic on $\mathbb{S}^{d-1}$, i.e., as living in the intersection of $\mathbb{S}^{d-1}$ with a 2-dimensional linear subspace $\mathcal{U}$. Hence $\sigma_{3}=\dots=\sigma_{m}=0$. 

\paragraph{Remark} In spectral clustering when $k_c=2$, one also clusters based on the signs of an eigenvector's entries~\cite{von_tutorial_2007}. However, the most extremal eigenvector is the vector of all ones not interesting (indeed, it corresponds to the partition trivially minimizing the cut size objective by placing all nodes into one cluster\cite{newman_finding_2006}); instead, the second extremal eigenvector $\hat{\bs u}$ is used. All of the discussions phrased here in terms of spectral modularity maximization carry through for spectral clustering provided that one works with $\hat{\bs u}$ instead of the most extremal eigenvector $\bs u$. 
\section{Extension to Time-Varying $k_c$}
\label{sec:variable-k-appendix}
This section elaborates upon the straightforward extension, mentioned in Section~\ref{sec:experiments}, of Algorithm~\ref{alg:geodesic-dcd} to the case where the number of latent communities $k_c$ varies in time. The approach described is applicable to any network modality for which an unsupervised benefit function (e.g. modularity or an extension of it) has been studied.

Section~\ref{sec:variable-k-motivation} will motivate our extension. Section~\ref{sec:variable-k-algorithm} will provide an explicit algorithm for it, modeled off of Algorithm~\ref{alg:geodesic-dcd}, together with empirical results. While the approach taken has some theoretical motivation, it is nevertheless a heuristic, and Section~\ref{sec:variable-k-alternatives} gives potential alternative heuristics and future directions toward a more principled approach. 
\subsection{Motivation}
\label{sec:variable-k-motivation}
Static spectral methods for community detection generally require the number $k_c$ of desired communities, the embedding dimension $k_e$, or both of these values to be specified in advance. The temporal setting amplifies this drawback, since in certain contexts the number of latent communities may earnestly change in time, as a result e.g. of merging or splitting \cite{cazabet_dynamic_2017}. This section justifies an extension of Algorithm~\ref{alg:geodesic-dcd} that is capable of automatically detecting a variable number of communities at each time step. 

Our approach is motivated by the observation that, upon writing the $\operatorname{RatioCut}$ objective function for unnormalized spectral clustering in terms of spectrum $\lambda_1(\bs M), \dots, \lambda_d(\bs M)$ of the MCM $\bs M=n\bs I - \bs L$ (Proposition~\ref{prop:prop-low-rank-unnormalized-sc}), the problem of minimizing $\operatorname{RatioCut}$ becomes equivalent to a flavor of \textit{max-sum vector partitioning} applied to $d$-dimensional spectral embeddings. If we use $k$-dimensional spectral embeddings instead, the vector partitioning problem is approximately equivalent to $\operatorname{RatioCut}$, with error proportional to the energy lost by discarding $\lambda_{k+1}(\bs M), \dots, \lambda_d(\bs M)$. We use this observation to argue that, although letting the spectral embedding dimension $k_e$ equal the desired number of communities $k_c$ is conventional and effective for spectral partitioning, if $k_e$ exceeds $k_c$ by a small-to-moderate amount\footnote{Not by too much, however, in light of noise considerations and dimensionality curses.} then we should still expect good performance (since Euclidean clustering with a larger $k_e$ is theoretically optimizing a function closer to the true $\operatorname{RatioCut}$ objective). The consequence is that, by choosing $k_e$ in Algorithm~\ref{alg:geodesic-dcd} to be an upper bound for the estimated number $k_{c,i}$ of latent communities at any time step $i$, we are permitted to vary $k_c=k_c(t)$ freely as a function of time without penalty. The task then becomes deciding how to automatically choose each $k_{c,i}=k_c(t_i)$; we provide one approach but note that unexplored alternatives may do better (appendix \ref{sec:variable-k-alternatives}).

\paragraph{Minimal graph cuts and maximal vector partitions} The present discussion uses (unnormalized) spectral clustering as a prototypical example, but the argument we offer has analogues in (at least) the contexts of spectral modularity maximization \cite{zhang_multiway_2015} and size-contrained graph partitioning \cite{alpert_spectral_1995}. Said analogues have motivated broader efforts to better understand how large $k_e$ should be compared to $k_c$ in general spectral settings \cite{rebagliati2011spectral}.

Recall the definition of the \textit{cut size objective} for evaluating a partition $(Z_1, \dots, Z_{k_c})$ of a simple graph $G$ with adjacency matrix $\bs A$ into $k_c$ communities: \begin{equation}
    \operatorname{Cut}(Z_{1}, \dots, Z_{k_c}) := \frac{1}{2} \sum_{i=1}^{k} W(Z_{i}, V \setminus Z_i), \text{ where } W(Z, Z') := \sum_{z \in Z, z' \in Z'} A_{zz'},\label{eqn:cut-size-obj}
\end{equation}
and the $\operatorname{RatioCut}$ objective \cite{hagen_new_1992} whose relaxation yields (unnormalized) spectral clustering:\begin{equation}
\label{eqn:ratiocut-obj}
    \operatorname{RatioCut}(Z_{1}, \dots, Z_{k_c}) := \sum_{i=1}^{k_c} \frac{\operatorname{Cut}(Z_{i}, V \setminus Z_i)}{|Z_{i}|}.
\end{equation}

Defining $\boldsymbol S$ to be the $d \times k$ community matrix $S_{ij}=\frac{\mathbbm{1}_{Z(i)=Z(j)}}{\sqrt{ |Z_{j}| }}$, it can be shown \cite{von_tutorial_2007} that $\operatorname{RatioCut}$ may be rewritten \begin{equation}
    \operatorname{RatioCut}(Z_{1},\dots,Z_{k})=\operatorname{Tr}(\boldsymbol  S^{\top} \boldsymbol  L \boldsymbol  S),
\end{equation}
where $\bs L = \bs D - \bs A$ is the unnormalized graph Laplacian of $G$ and $Z(\ell)$ denotes the community to which node $\ell$ belongs. With the MCM $\bs M = n \bs I - \bs L$ defined as in Proposition~\ref{prop:prop-low-rank-unnormalized-sc}, we can rewrite this as \begin{equation}
    \operatorname{RatioCut}(Z_1, ..., Z_k) = n\operatorname{Tr}(\boldsymbol{S}^T\boldsymbol{S}) - \operatorname{Tr}(\boldsymbol{S}^T {\boldsymbol{M}} \boldsymbol{S})=nk - \operatorname{Tr}(\boldsymbol{S}^T {\boldsymbol{M}} \boldsymbol{S}).
\end{equation}
In particular, choosing $(Z_{1},\dots,Z_{k_c})$ minimizing $\operatorname{RatioCut}$ is equivalent to choosing $(Z_{1},\dots,Z_{k_c})$ maximizing $\operatorname{Tr}(\boldsymbol  S^{\top} \boldsymbol  M \boldsymbol  S)$. 

Take a spectral decomposition $\boldsymbol M= \boldsymbol V \boldsymbol \Lambda \boldsymbol V^{\top}$, with the entries of $\boldsymbol \Lambda$ positioned in descending order, and define $\boldsymbol N:= \boldsymbol V \boldsymbol \Lambda^{1/2}$. The $\ell$th row $\bs r_\ell$ of $\boldsymbol N$ equals the $\ell$th Euclidean node embedding derived by unnormalized spectral clustering when appropriately truncated, up to scalings from $\boldsymbol \Lambda^{1/2}$.\footnote{In fact, some versions of spectral clustering use $\bs N$ directly \cite{zhang_detecting_2020}.} We can rewrite our objective in terms of these node embeddings:  \begin{equation}
    \operatorname{Tr}(\boldsymbol S^{\top} \boldsymbol M \boldsymbol S)= \operatorname{Tr}(\boldsymbol{S}^T \boldsymbol{N} \ \boldsymbol{N}^T \boldsymbol{S}) = \sum_{j=1}^k \frac{1}{|Z_j|} \left\|\sum_{\ell \in Z_j} \boldsymbol{r}_\ell \right\|_2^2.
\end{equation}
In this light, we see that minimizing $\operatorname{RatioCut}$ is equivalent to a flavor of \textit{max-sum vector partitioning} applied to $d$-dimensional node embeddings derived from the leading eigenvectors of $\boldsymbol M$. Exactly solving max-sum vector partitioning problems is extremely expensive \cite{zhang_multiway_2015}; as such, various heuristics have been proposed \cite{zhang_multiway_2015, alpert_spectral_1995, wang2008vector}, including ones that are nearly equivalent to $k$-means \cite{zhang_multiway_2015}. Truncating to $k_c \leq p \leq d$ rows of $\boldsymbol N$ yields a max-sum vector partitioning problem in $\mathbb{R}^{p}$ whose objective approximates the $\operatorname{RatioCut}$ objective. When $p=k_c$ we recover an algorithm very similar to spectral clustering. When $p=d$ we are optimizing the $\operatorname{RatioCut}$ objective exactly rather than a low-rank approximation to it — if doing so were computationally feasible, this would theoretically give the best results. So choosing an embedding dimension $p=k_{e}$ exceeding the number of desired clusters $k_{c}$ is, in principle, an advantage rather than a detriment. In practice, due e.g. to noise considerations, dimensionality curses, and other challenges to which Euclidean clustering heuristics are generally susceptible in high-dimensional space, it is not recommended to let $k_{e}$ be arbitrarily high relative to $k_{c}$.  

We can utilize these ideas towards extending Algorithm~\ref{alg:geodesic-dcd} in the following manner. By definition, $k_{e}$ must be fixed over time in Algorithm~\ref{alg:geodesic-dcd}. However, nothing prevents $k_{c}=k_{c,i}=k_c(t_i)$ from varying based on $i \in [T]$, and — based on the discussion above — this should not incur any penalty compared to if we had chosen $k_{e}=k_{c,i}$ from the start, so long as $k_{c,i} \leq k_{e}$ and there exists earnest $k_{c}$-way community structure at time $i$. 

\paragraph{Automatically choosing the number of communities} How should $k_{c,i}$ be chosen? The easiest approach would be to evaluate some unsupervised partition score (e.g., modularity or the appropriate extension to other network modalities) with respect to different choices of $k_{c,i}$ and choose the maximizer. Of course, evolving community detection algorithms are in part motivated by the premise that 
the `correct’ community partition at a given time step may not always be the one maximizing the static modularity (e.g., due to measurement noise or stability considerations \cite{cazabet_dynamic_2017, rossetti_community_2018}). We therefore recommend incorporating some temporal awareness into the heuristic by filtering the scores for dynamic regularity, and reiterate our suggestion from Section~\ref{sec:experiments} to initialize the Euclidean clustering heuristic of $\mathscr{A}$ with clusters found at the previous time step. This is the approach taken in the following algorithm; potential alternatives are mentioned in Section~\ref{sec:variable-k-alternatives}.

\subsection{Algorithm and Evaluation}
\label{sec:variable-k-algorithm}
Informed by the previous discussion (appendix  \ref{sec:variable-k-motivation}), we provide an extension of Algorithm~\ref{alg:geodesic-dcd} to the case where the true number of communities earnestly varies over time, e.g., due to communities merging or splitting.

\begin{algorithm}[H]
\caption{Evolving Community Detection with Grassmann Geodesics — Variable $k_c$}
\label{alg:geodesic-dcd-variable-k}
\begin{algorithmic}[1]
\Require Graphs and associated snapshot times $\{ G_i, t_i \}_{i=1}^{T}$
\Require Estimates $k_{\text{min}}$ and $k_{\text{max}}$ of the minimum and maximum number of communities present at any time step
\Require Static spectral method $\mathscr{A}$ admitting an MCM (i.e., $\mathscr{A}$ formulated as in Algorithm~\ref{alg:static-scd-template})
\Require Unsupervised network partition benefit function (e.g., modularity for simple networks) 
\State Apply steps 1 and 2 of algorithm \ref{alg:geodesic-dcd} to fit a geodesic, using $k_{e}:=k_{\text{max}}$ as the embedding dimension
\State Instantiate $\boldsymbol H \in \mathbb{R}^{(k_{\max}-k_{\min}+1) \times T}$ 
\For{$k$ in $[k_{\min}, k_{\max}] \cap \mathbb{Z}$}
    \For{$i \in [T]$}
        \State \textbf{Euclidean clustering:} apply the Euclidean clustering heuristic of $\mathscr{A}$ (e.g., $k$-means) at snapshot $i$, storing the modularity $Q$ achieved by the resulting partition as $H_{ki} \leftarrow Q$. Recommended: initialize with results from time $i-1$ if applicable.
    \EndFor
    \State \textbf{Filtering:} Optionally, smoothen over the row $\boldsymbol H_{k, :}$ (our experiments convolve with a 1D Gaussian filter; median filtering is another straightforward option.) 
\EndFor
\For{$i \in [T]$}
    \State Define $k_{c,i}$ to be the choice of $k$ corresponding to the maximum element of the vector $\boldsymbol H_{:, i}$  
\EndFor
\Ensure Partitions $(Z_{1},\dots,Z_{k_{c, i}})_{i}$, $i \in [T]$, of each $G_i$ into communities
\end{algorithmic}
\end{algorithm}


Implementing Algorithm~\ref{alg:geodesic-dcd-variable-k} requires a benefit function suitable for the network modality. We remark that, at minimum, an extension of modularity exists in the literature for all network modalities considered in this text (signed \cite{he2023generalized}; overlapping \cite{nicosia2009extending}; directed \cite{leicht2008community}; multiview \cite{mucha_community_2010}; cocommunity \cite{barber2007modularity}; hierarchical 
\cite{reichardt2006statistical}). \new{We also remark that,  unless $k_{\text{max}}$ is quite large, it generally suffices to fix $k_{\text{min}}=2$ if no preferred choice is known. In practice, therefore, $k_{\text{max}}$ is usually the only required parameter. All experiments in this paper use the default $k_{\text{min}}=2$.}

Figure~\ref{fig:variable-k-results} shows representative results of Algorithm~\ref{alg:geodesic-dcd-variable-k} on a synthetic benchmark generated according to the methodology in \cite{cazabet2020evaluating}, wherein eight planted communities gradually merge into six over time. The extension of geodesic normalized spectral clustering to case of time-varying $k_c$ via Algorithm~\ref{alg:geodesic-dcd-variable-k} uniformly outperforms the benchmarks; indeed, it is the only method to recover the planted community structure at every snapshot.

\begin{figure}
    \centering
    \begin{subfigure}{\textwidth}
        \centering
        \includegraphics{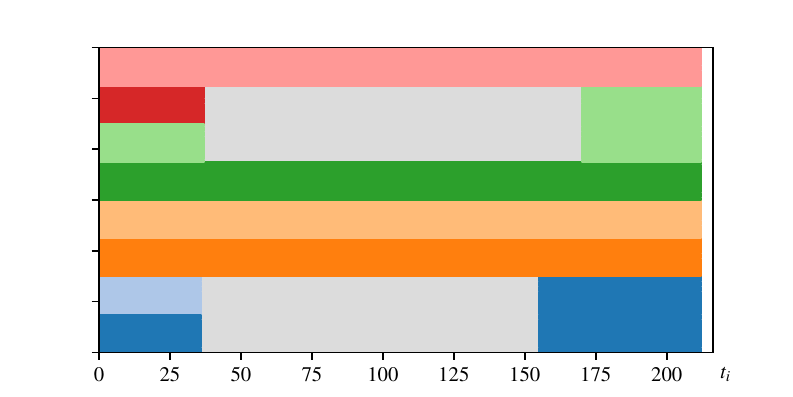}

    \end{subfigure}
    \vspace{1cm} 
    \begin{subfigure}{\textwidth}
        \centering
        \includegraphics {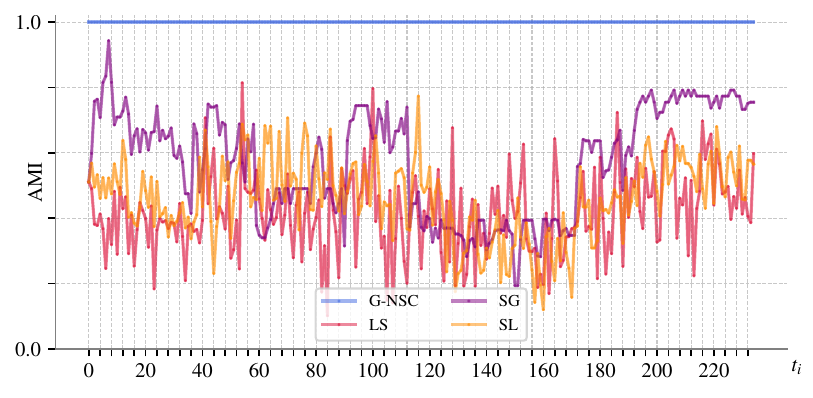}
    \end{subfigure}
    \caption{Evaluation of Algorithm~\ref{alg:geodesic-dcd-variable-k} on an evolving community detection benchmark generated using the \texttt{tnetwork} library \cite{cazabet_documentation_nodate} via the methodology from \cite{cazabet2020evaluating}. Initially, $8$ communities are present in the data. Over the course of $T=211$ snapshots, two pairs of communities merge, such that ultimately six communities remain. The other four communities remain stable throughout. LEFT: A longitudinal plot of the planted community structure over time. The red community gradually merges into the light green community, and the light blue community gradually merges into the dark blue community. Note that the benchmark, by convention, does not treat nodes as belonging to any community at all when they are transitioning; this is indicated in gray. RIGHT: \new{AMI} comparison over time for geodesic normalized spectral clustering (G-NSC, extended to detect varying $k_c$ via Algorithm~\ref{alg:geodesic-dcd-variable-k} with $k_{\min}=2$, $k_{\max}=10$) versus the  Label Smoothing (LS) approach of \cite{falkowski_mining_2006},  the Smoothed Louvain (SL) algorithm of \cite{aynaud_static_2010}, and the Graph Smoothing (SG) approach of \cite{guo_evolutionary_2014}. We remove from consideration the nodes currently lacking a true community membership when computing \new{AMI} at a given time step.}
    \label{fig:variable-k-results}
\end{figure}

\new{\paragraph{Evaluation on the temporal college football network} Figure~\ref{fig:real-college-football-results} shows the results of Algorithm~\ref{alg:geodesic-dcd-variable-k} applied to a temporal network inspired by the the popular American college football network of \cite{girvan2002community}, where an edge is placed between two football teams whenever a game is played between them over a ten-year span (2009-2019).\footnote{\new{The full dataset \newnew{is available at \href{https://github.com/jacobh140/century-of-college-football}{https://github.com/jacobh140/century-of-college-football}}.}} The ground truth community structure is provided by conference alignment, as in \cite{girvan2002community}. The community structure evolves as teams change their conference membership over time. The raw data is obtained via the API offered in \cite{cfbdataapi}, then segmented into one snapshot per month of each football season (September 2009, October 2009, November 2009, September 2010$\dots$) to obtain a time series of $T=33$ networks containing $d=118$ nodes each. A comparison of results can be seen in Figure~\ref{fig:real-college-football-results}. Geodesic normalized spectral clustering largely outperforms the benchmarks across time. We remark that the periodicity in some of the benchmarks' performance may be attributed to the fact that the density of in-conference games played is lower in September than in October and November. }

\begin{SCfigure}
    \includegraphics{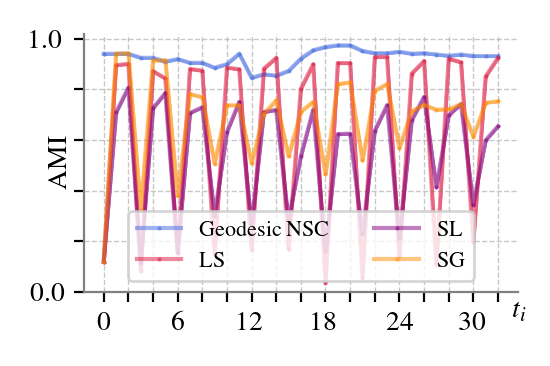}
    \caption{\new{Evaluation of Algorithm~\ref{alg:geodesic-dcd-variable-k} on a monthly extension of the college football network of ~\cite{girvan2002community}. Geodesic normalized spectral clustering (G-NSC, extended to detect varying $k_c$ via Algorithm~\ref{alg:geodesic-dcd-variable-k} with $k_{\min}=2$, $k_{\max}=10$) is compared against the  Label Smoothing (LS) approach of \cite{falkowski_mining_2006},  the Smoothed Louvain (SL) algorithm of \cite{aynaud_static_2010}, and the Graph Smoothing (SG) approach of \cite{guo_evolutionary_2014}, consistently outperforming the benchmarks over time.}} 
    \label{fig:real-college-football-results}
\end{SCfigure}

\subsection{Alternative Methods and Future Interest}
\label{sec:variable-k-alternatives}
The heuristic offered by Algorithm~\ref{alg:geodesic-dcd-variable-k}, though empirically effective and simple to implement, is nevertheless one of many options for extending Algorithm~\ref{alg:geodesic-dcd} to the setting where the number of communities changes over time. One alternative to the `benefit function + sweep $k_c$' approach detailed previously would be to employ one of many methods for automated model selection in $k$-means if applicable, using e.g. the GAP statistic \cite{tibshirani2001estimating} or silhouette scores \cite{shahapure2020cluster, rousseeuw1987silhouettes} either as a benefit function for Algorithm~\ref{alg:geodesic-dcd-variable-k} or standalone, or using $X$-means \cite{pelleg2000x}. We also note that the geodesic generalization of hierarchical community detection described in Section~\ref{sec:hierarchical-networks}, put in conjunction with Algorithm~\ref{alg:geodesic-dcd-variable-k} by using the latter to estimate the size of the initial partition, yields a dynamic algorithm for detecting hierarchical communities that is hyperparameter-free. It is of future interest to seek a variant of Algorithm~\ref{alg:geodesic-dcd} which circumvents the requirement of a temporally fixed embedding dimension $k_e$ altogether.

\section{Remarks on Scaling}
\label{sec:additional-experiments-appendix}

\new{The worst case per-iteration time complexity of the geodesic-fitting step in Algorithm~\ref{alg:geodesic-dcd} and Algorithm~\ref{alg:geodesic-dcd-variable-k} is $O(Td^2k_e)$, where $k_e$ is the embedding dimension. The other potentially expensive step is the multiple applications of a Euclidean clustering heuristic. In the case of $k$-means, various techniques have been devised to increase efficiency \cite{bahmani2012scalable, shindler2011fast, hamerly2010making}, though none of our experiments make use of these.}

\new{In this section, we assess the scaling of Algorithm~\ref{alg:geodesic-dcd-variable-k} with respect to $d$ and $T$ on the dynamic stochastic block model introduced in Section~\ref{sec:experiments}. We implement Algorithm~\ref{alg:geodesic-dcd-variable-k} for normalized spectral clustering in Python on a 2019 MacBook Pro (2.3 GHz 8-Core Intel Core i9 Processor, 16GB  RAM), using sparse matrices and the $k$-means$++$ implementation found in \texttt{scikit-learn}. The other algorithms are compared using their implementations in the \texttt{tnetwork} \texttt{Python} library. We find that, for the parameter settings, hardware, and implementations considered, the geodesic recovered or nearly recovered the true community structure, outperforming the benchmarks while running about an order of magnitude ($10.47$x on average) faster.}

\begin{table}[h]
\centering
\small
\begin{tabular}{|@{\hspace{2pt}}c@{\hspace{2pt}}|@{\hspace{2pt}}c@{\hspace{2pt}}|@{\hspace{2pt}}c@{\hspace{2pt}}|@{\hspace{2pt}}c@{\hspace{2pt}}|@{\hspace{2pt}}c@{\hspace{2pt}}|@{\hspace{2pt}}c@{\hspace{2pt}}|}
\hline
$d$ & $T$ & GNSC & LS & SL & SG \\ \hline
$10^3$ & $10^2$ & $\mathbf{1.0 \pm 0.0}$ \textbf{(36s)} & $0.65 \pm 0.32$ (481s) & $0.86 \pm 0.02$ (144s) & $0.80 \pm 0.03$ (349s) \\ \hline
$10^3$ & $10^3$ & $\mathbf{1.0 \pm 0.0}$ \textbf{(387s)} & $0.65 \pm 0.33$ (5893s) & $0.87 \pm 0.02$ (2172s) & $0.79 \pm 0.01$ (5445s) \\ \hline
$10^4$ & $10^2$ & $\mathbf{0.99 \pm 5 \times 10^{-5}}$ \textbf{(641s)} & {N/A} & $0.71 \pm 0.11$ (7266s) & {N/A} \\ \hline
\end{tabular}
\caption{\new{Comparison of mean AMI and empirical runtime on the dynamic stochastic block model ($p_{\text{out}}=200d^{-1}$, $p_{\text{in}}=300d^{-1}$, $p_{\text{switch}}=(10T)^{-1}$, $k=5$) as size and longitude vary. The methods compared are geodesic normalized spectral clustering (G-NSC, ours), label smoothing (LS, \cite{falkowski_mining_2006}), smoothed Louvain (SL, \cite{aynaud_static_2010}), and smoothed graph (SG, \cite{guo_evolutionary_2014}), as in Section~\ref{sec:experiments}. A `N/A' designation is given wherever a method timed out during a simulation.} }
\label{tab:scaling-experiments}
\end{table}

\end{document}